\documentclass[conference]{IEEEtran}
\usepackage{amsmath,graphicx, bm,amsfonts,dblfloatfix,amsthm}
\usepackage{cases}
\usepackage{subcaption}
\usepackage{accents}
\usepackage{color}
\usepackage{cite}
\usepackage{array}
\usepackage[acronym,nomain]{glossaries}
\usepackage[ruled,vlined,linesnumbered]{algorithm2e}
\usepackage[bottom]{footmisc}
\usepackage{pgfplots, tikz}
\pgfplotsset{
    compat=newest
}
\usetikzlibrary{positioning}
\DeclareMathOperator*{\Tr}{tr}

\newtheorem{lemma}{Lemma}

\newacronym{spucpa}{SPUCPA}{stacked polarimetric uniform circular patch array}
\newacronym{suca}{SUCA}{stacked uniform circular array}
\newacronym{scf}{SCF}{spatial correlation function}
\newacronym{uca}{UCA}{uniform circular array}
\newacronym{ula}{ULA}{uniform linear array}
\newacronym{doa}{DoA}{direction of arrival}
\newacronym{mse}{MSE}{mean squared error}
\newacronym{rmse}{RMSE}{root mean squared error}
\newacronym{cs}{CS}{compressed sensing}
\newacronym{omp}{OMP}{orthogonal matching pursuit}
\newacronym{sgd}{SGD}{Stochastic Gradient Descent}
\newacronym{crb}{CRLB}{Cramér-Rao Lower Bound}

\hyphenation{op-tical net-works semi-conduc-tor}


\newcommand{\comment}[1]{}



\newcommand{\trans}{{\rm T}}

\newcommand{\herm}{{\rm H}}

  






\usepackage[stretch=50,shrink=50,kerning=true]{microtype}
\definecolor{tui_orange_dark}{cmyk}{0,0.6,1,0}
\definecolor{tui_orange_light}{cmyk}{0.0000,0.0876, 0.1474, 0.0157}
\definecolor{tui_green_dark}{cmyk}{1,0,0.5,0.2}
\definecolor{tui_green_light}{cmyk}{0.0576,0.0041, 0.0000, 0.0471}
\definecolor{tui_blue_dark}{cmyk}{1.0000,0.5000,0.0000,0.6000}
\definecolor{tui_blue_light}{cmyk}{0.0920,0.0440,0.0000,0.0196}
\definecolor{tui_red_dark}{cmyk}{0.0000,1.0000,1.0000,0.2000}
\definecolor{tui_red_light}{cmyk}{0.0000,0.1107,0.1107,0.0078}

\pgfplotscreateplotcyclelist{tui_dark}{
{thick,tui_orange_dark,mark=*,mark options={thin,scale=0.75, fill=tui_orange_dark!50!white}},
{thick,tui_green_dark,mark=square*,mark options={thin,scale=0.75, fill=tui_green_dark!50!white}},
{thick,tui_blue_dark,mark=triangle*,mark options={thin,scale=0.75, fill=tui_blue_dark!50!white}},
{thick,tui_red_dark,mark=diamond*,mark options={thin,scale=0.75, fill=tui_red_dark!50!white}},
}

\pgfplotsset{
    cycle list name=tui_dark,
}

\begin{document}
\abovedisplayskip      = 1pt plus 1pt minus 1pt
\abovedisplayshortskip = 1pt plus 1pt minus 1pt
\belowdisplayskip      = 1pt plus 1pt minus 1pt
\belowdisplayshortskip = 1pt plus 1pt minus 1pt
%
\title{Combining Matrix Design for 2D DoA Estimation with Compressive Antenna Arrays using Stochastic Gradient Descent}


\author{\IEEEauthorblockN{
S. Pawar\IEEEauthorrefmark{1}, 
S. Semper\IEEEauthorrefmark{1}, 
F. R\"omer\IEEEauthorrefmark{1}\IEEEauthorrefmark{2}
}                                     
\IEEEauthorblockA{\IEEEauthorrefmark{1}
Institute for Information Technology, Technische Universit\"at Ilmenau, Ilmenau, Germany}
\IEEEauthorblockA{\IEEEauthorrefmark{2}
Fraunhofer Institute for Nondestructive Testing (IZFP), Ilmenau, Germany}
 Email: sankalp-prakash.pawar@tu-ilmenau.de
}

\maketitle

\begin{abstract}

Recently, compressive antenna arrays have been considered for \gls{doa} estimation with reduced hardware complexity. By utilizing compressive sensing, such arrays employ a linear combining network to combine signals from a larger set of antenna elements in the analog RF domain. In this paper, we develop a design approach based on the minimization of error between spatial correlation function (SCF) of the compressive and the uncompressed array resulting in the estimation performance of the two arrays to be as close as possible. The proposed design is based on grid-free stochastic gradient descent (SGD) optimization. In addition to a low computational cost for the proposed method, we show numerically that the resulting combining matrices perform better than the ones generated by a previous approach and combining matrices generated from a Gaussian ensemble.
\end{abstract}

%
\IEEEpeerreviewmaketitle

\section{Introduction}

Determination of the direction of impinging waves using an antenna array is formulated as the \gls{doa} estimation problem \cite{KV:96, HLVT:02}. It is well established that a comparatively large number of elements in the receiving array is required to achieve high \gls{doa} estimation accuracy \cite{HLVT:02}. Implementation of antenna arrays for \gls{doa} estimation usually requires a radio frequency (RF) chain to process each antenna output. Such a RF chain could include components such as a low-noise amplifier (LNA), filters, down-converter, and analog-to-digital (ADC) converter. Thus, realisation of a relatively large array brings along increased hardware costs.
In recent years, major focus of research has been dedicated to develop techniques that provide desirable \gls{doa} estimation performance while enabling reduction in hardware complexity. Ideas from the \gls{cs} domain have been considered to reduce the complexity of hardware implementation while maintaining \gls{doa} estimation accuracy \cite{HWKH:10, rossi2012spatial, shakeri2012direction,HL:14}.

One such approach involves the application of \gls{cs} paradigm in the spatial domain by employing an analog combining network to linearly combine $N$ antenna outputs to a smaller number of $M < N$ channels \cite{Wan09,GZS:11,semper181Ddoa}.  Using such a \gls{cs}-based combining network, only $M$ channels need to be processed through their respective RF chains. The hardware complexity of the \textit{compressive} array formed by the output channels of the combining network is lower compared to its uncompressed counterpart without such a network. Besides, the compressive array provides a better estimation performance owing to its larger aperture in comparison to an equivalent uncompressed array of size $M$.

In \cite{IbRaLaKo16}, a low-complexity design approach based on the \gls{scf} for 1D \gls{doa} estimation is proposed while its extension for 2D \Gls{doa} estimation is investigated in \cite{Lavrenko2018}. In \cite{Lavrenko2018}, instead of taking the complete 4D-\gls{scf} to define the cost function for optimization, only 2D subsets are used. A method to choose these 2D subsets, and an evaluation of the \gls{doa} performance while achieving considerable reduction in computational requirements compared to a direct extension of the approach in \cite{IbRaLaKo16} is described in \cite{Lavrenko2018}.

Despite the effectiveness of the approach in \cite{Lavrenko2018} its computational requirements increase substantially with increasing size of the antenna array. In this contribution, we propose a \gls{sgd} based approach with momentum \cite{robbins1951} for obtaining the kernels of the combining matrix. The cost function is defined as the average difference between the compressed and the uncompressed \Gls{scf} evaluated at a given set of 2D angles. The gradient of the cost function with respect to the combining matrix is analytically derived before applying the descent algorithm. We evaluate the performance of the proposed design approach for compression of a synthetic \gls{suca} in terms of the difference in the resulting spatial correlation functions as well as the \gls{crb}.

\glsreset{scf}
\glsreset{mse}
\section{System model}

\subsection{Input Signal}

For an $N$-element antenna array, the complex baseband signal received by it is given by $\bm{y}(t) \in \mathbb{C}^N$ such that,
\begin{equation}
    \bm{y}(t) = \sum_{s=1}^S \bm{a}(\theta_s, \vartheta_s) x_s(t) + \bm{v}(t), 
    \label{eq:model_conv}
\end{equation}
with the array receiving $S$ far-field narrowband plane waves impinging from \glspl{doa} described by azimuth ($\theta$) and elevation ($\vartheta$) pair $(\theta_s, \vartheta_s)$. The array steering vector is $\bm{a}(\theta_s, \vartheta_s)= [a_1(\theta_s, \vartheta_s), \ldots, a_N(\theta_s, \vartheta_s)]^{\rm H} \in \mathbb{C}^N$ for $s = 1, \dots, S$, $x_s(t)$ is the complex transmitted signal, while $\bm{v}(t) \in \mathbb{C}^N$ is the additive noise. Determination of angles pairs $(\theta_s, \vartheta_s)$ from $\bm{y}(t)$ is the goal of \gls{doa} estimation.

\subsection{Compressive Arrays}

A dedicated RF receiver chain is required for each antenna output of the array to comply with the model in \eqref{eq:model_conv}. Hardware implementation of such an array is prohibitive with considerations of complexity, cost, and power consumption. Reduction in the number of RF channels without the compromising \gls{doa} estimation performance can be achieved by applying the \textit{compressive} approach. A compressive array is obtained by utilizing an analog combining network to linearly combine the outputs of an array to a lower number of channels  \cite{IbRaLaKo16}.
The combining network is described by the matrix  $\bm{\Phi} \in \mathbb{C}^{M \times N}$, such that in baseband, $[\bm{\Phi}]_{m,n} = \alpha_{m,n}e^{\jmath \varphi_{m,n}}$ are the complex weights that are applied to the antenna outputs, with $m = 1, \dots M, n = 1, \dots, N$ and $M <N $. Thus, the array output of the compressive array is given by
\begin{align}
   \tilde{\bm{y}}(t)= \bm{\Phi}\bm{y}(t) =  \sum_{s=1}^S \tilde{\bm{a}}(\theta_s, \vartheta_s) x_s(t) + \bm{w}(t),
	\label{eqn:nm_doa}
\end{align}
where $\tilde{\bm{a}}(\theta, \vartheta) = \bm{\Phi} \bm{a} (\theta, \vartheta) \in \mathbb{C}^{M \times 1}$ represents the effective (compressed) array steering vector after the combining, and $\bm{w}(t) \in  \mathbb{C}^{M \times 1}$ is the noise vector representing additive noise sources of the system \cite{IbRaLaKo16}. Now, the goal is to design $\bm{\Phi}$ in order to allow accurate estimation of $(\theta_s, \vartheta_s)$ from the above compressive measurement $\tilde{\bm{y}}(t)$.

\section{Combining Matrix Design based on the Spatial Correlation Function}

For an array with a manifold $\bm{a}(\theta, \vartheta)$  the \gls{scf} $\rho : \mathbb{R}^4 \rightarrow \mathbb{C}$ is given by
\begin{equation}
    \rho(\theta_1, \theta_2, \vartheta_1, \vartheta_2) \overset{\Delta}{=} {\bm{a}}^{\rm H}(\theta_1, \vartheta_1){\bm{a}}(\theta_2, \vartheta_2)
    \label{eq:SCF}.
\end{equation}
Inserting  $\tilde{\bm{a}}(\theta, \vartheta)$ into \eqref{eq:SCF},  we obtain the \textit{effective} \gls{scf} of the compressive array as
\begin{align}
	\tilde{\rho}(\theta_1, \theta_2, \vartheta_1, \vartheta_2) &= \tilde{\bm{a}}^{\rm H}(\theta_1, \vartheta_1)\tilde{\bm{a}}(\theta_2, \vartheta_2) \notag \\ &= {\bm{a}}^{\rm H}(\theta_1, \vartheta_1) \bm{\Phi}^{\rm H} \bm{\Phi}{\bm{a}}(\theta_2, \vartheta_2)
	\notag \\ &= {\bm{a}}^{\rm H}(\theta_1, \vartheta_2) \bm{G}_{\rm \Phi} {\bm{a}}(\theta_1, \vartheta_2),
\end{align}
where for any arbitrary matrix $\bm{B}$ we denote by $\bm{G}_{\rm B} =  \bm{B}^{\rm H} \bm{B}$  its Gramian.

Depending on the application, one might have different requirements to the effective \gls{scf} $\tilde{\rho}(\theta_1, \theta_2, \vartheta_1, \vartheta_2)$, such that it ensures uniform sensitivity and/or good cross-correlation properties for instance. Our particular design goal in this work is to reduce the number of receiver channels $M$
while not compromising the DoA estimation performance compared to the original array without the combining. To evaluate the performance of a given $\bm{\Phi}$, we define $\delta : \mathbb{C}^{M \times N} \rightarrow \mathbb{R}$ as
\begin{equation}
    \delta(\bm{\Phi}) = \iint\displaylimits_{\theta}\iint\displaylimits_{\vartheta} |e(\bm{\Phi}, \theta_1, \theta_2, \vartheta_1, \vartheta_2)|^2 \mathrm{d} \theta_1 \mathrm{d} \theta_2 \mathrm{d} \vartheta_1 \mathrm{d} \vartheta_2,
    \label{eq:delta}
\end{equation}
where $e : \mathbb{C}^{M \times N} \times \mathbb{R}^4$ is defined as
\begin{align}
	e&(\bm{\Phi}, \theta_1, \theta_2, \vartheta_1, \vartheta_2) \overset{\Delta}{=} \tilde{\rho}(\theta_1, \theta_2, \vartheta_1, \vartheta_2)-{\rho}(\theta_1, \theta_2, \vartheta_1, \vartheta_2) \notag  \\
	&= {\bm{a}}^{\rm H}(\theta_1, \vartheta_1)\bm{G}_{\rm \Phi} {\bm{a}}(\theta_2, \vartheta_2) - {\bm{a}}^{\rm H}(\theta_1, \vartheta_1){\bm{a}}(\theta_2, \vartheta_2).	\label{eq:error_func}
\end{align}
Note that a small value of $\delta(\bm{\Phi})$ implies that we can expect a \gls{doa} estimation performance close to that of the original array before the combining, as discussed in \cite{IbRaLaKo16} for the case of 1D \gls{doa} estimation. So to this end $\delta$ serves as a suitable proxy to estimate the performance of a given $\bm\Phi$.

Consequently, in order to find a good combining matrix $\bm{\Phi}$ one has to minimize \eqref{eq:delta}. This however implies solving an optimization problem, where the evaluation of the objective function itself is very time consuming since it has to be approximated with some numerical integration scheme. To circumvent this problem, consider an i.i.d. sequence of random vectors $(\bm\Theta^k)_{k \in \mathbb{N}} \subset ([0, 2\pi) \times (0, \pi))^L$ over some appropriate probability measure space $(\Omega, \mathcal{A}, P)$ and $L \in \mathbb{N}$, which represent $L$ points on the unit sphere, where the first component is the angle in azimuth and the second elevation. Further, let us define
\[
    D(\bm\Phi, \bm\Theta) = \frac{1}{K L^2}\sum\limits_{k = 1}^K \sum\limits_{\ell_1}^{L} \sum\limits_{\ell_2}^{L} \vert e(\bm\Phi, \Theta^k_{\ell_1,1}, \Theta^k_{\ell_2,1}, \Theta^k_{\ell_1,2}, \Theta^k_{\ell_2,2})\vert^2.
\]
For the following, we set
\[
    \bm A_k = [\bm a(\Theta^k_{1,1}, \Theta^k_{1,2}), \dots, \bm a(\Theta^k_{L, 1}, \Theta^k_{L, 2})] \in \mathbb{C}^{M \times L}
\]
and
\[
    \bm E_k(\bm\Phi) = \bm{A}_k^{\rm H} \bm{G}_{\bm \Phi} \bm{A}_k - \bm{A}_k^{\rm H}\bm{A}_k \in \mathbb{C}^{L \times L},
\label{eq:error_mtx_full}
\]
which allows us to write
\[
D(\bm\Phi, \bm\Theta) = \frac{1}{K}\sum\limits_{k = 1}^K \Vert \bm E_k (\bm \Phi) \Vert^2_F\]
and
\[
\nabla_{\bm\Phi} D(\bm\Phi, \bm\Theta) = \frac{1}{K}\sum\limits_{k = 1}^K \nabla_{\bm\Phi} \Vert \bm E_k(\bm\Phi) \Vert^2_F.
\]

The key idea now is to minimize $\delta(\bm\Phi)$ by minimizing $D(\bm\Phi, \bm \Theta(\omega))$ for some fixed realization $\omega \in \Omega$. In other words we carry out the minimization of $D(\bm \Phi, \bm \Theta)$ for a specific realization of $\bm \Theta(\omega)$, which is a suitable approach, if we choose $K$ large enough and the distributions of the $\bm \Theta^k$ such that this random process explores the angular domain well enough. This stochastic approach is outlined in the following section.

\section{Stochastic Gradient Descent-based Design}
Gradient Descent with momentum \cite{LeCun2012} is a popular and simple first order technique to find local minima of smooth functions. In its most simple form it minimizes a smooth function $f : \mathbb{C}^{n} \rightarrow \mathbb{C}$ by iterating
\begin{align}
    \bm v_{i+1} &= \eta \bm v_i - \alpha \nabla_{\bm x}f(\bm x_i) \\
    \bm x_{i+1} &= \bm x_i - \bm v_i
    \label{SG_state_iter}
\end{align}
for initial velocity and state variables $\bm v_0, \bm x_0 \in \mathbb{C}^n$, a drag parameter $\eta \in [0,1)$ and a step size $\alpha > 0$. Now, if $f$ is of the form $f = (1/K)\sum_k^K g_k$ for smooth functions $g_k : \mathbb{C}^{n} \rightarrow \mathbb{C}$ above velocity can be rewritten as
\[
\bm v_{i+1} = \eta \bm v_i - \alpha \frac 1K \sum_k^K \nabla_{\bm x}g_k(\bm x_i).
\]
Now, suppose $K$ is prohibitively large such that the evaluation of $f$ and its gradient becomes computationally intractable. Then a well known approach that has recently gained a lot of traction because of its use in the machine learning community \cite{LeCun2012} is stochastic gradient descent \cite{robbins1951}, which exploits the fact that
\[
    \mathbb{E}\frac {1}{\kappa} \sum_{k \in \mathcal{K}} \nabla_{\bm x}g_k(\bm x_i) = \frac 1K \sum_k^K \nabla_{\bm x}g_k(\bm x_i)
\]
for $\mathcal{K}$ selected uniformly at random from the set of all subsets of $\{1, \dots, K\}$ with cardinality $\kappa$. This makes the above random sum depending on the random $\mathcal{K}$ a good approximation of $\nabla_{\bm x} f$ due to the weak law of large numbers, which states that with high probability above partial sum is close to its expectation for a single realization $\mathcal{K}(\omega)$. Then the velocity update reads as
\begin{align}
    \bm v_{i+1} = \eta \bm v_i - \alpha \frac 1\kappa \sum_{k \in \mathcal{K}} \nabla_{\bm x}g_k(\bm x_i),
    \label{SGD_iter}
\end{align}
while the state variable update stays the same as in \eqref{SG_state_iter}. So the proposed stochastic gradient descent consists of iteratively carrying out \eqref{SGD_iter} and \eqref{SG_state_iter}. Now, we outline how to apply the iterative approach explained above to the problem at hand.

\subsection{SGD for Combining Matrix Design}

For the specific problem of designing a suitable combining matrix $\bm\Phi$ we apply a slight modification of \eqref{SGD_iter} by reformulating it as an online minimization of $D(\bm\Phi, \bm\Theta)$ via
\begin{align}
    \bm v_{i+1} = \eta \bm v_i - \alpha \nabla_{\bm\Phi} \Vert \bm E_i(\bm\Phi_i) \Vert^2_F,
\end{align}
which allows to generate $\Theta^i$ and thus $\bm E_i(\bm\Phi)$ during the iteration without the need to store or precompute them during or before the execution. This means instead of randomly subselection points from a prespecified set of points on the sphere, we keep drawing a new set of points on the fly in each iteration step.

Clearly, we are still in need of an analytical expression of the gradient of $\Vert \bm E_i(\bm\Phi) \Vert^2_F$, which is provided in the following result.
\begin{lemma}
For given $\bm \Theta^k$ and $\bm\Phi \in \mathbb{C}^{M \times N}$ it holds that
\[
    \nabla_{\bm\Phi} \Vert \bm E_i(\bm\Phi_i) \Vert^2_F = 4 \bm\Phi \bm A_i \bm A_i^{\rm H} \bm G_{\bm \Phi} \bm A_i \bm A_i^{\rm H} - 4\bm \Phi \bm A_i \bm A_i^{\rm H}  \bm A_i \bm A_i^{\rm H}.
\]
\end{lemma}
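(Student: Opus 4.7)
The plan is to differentiate $\|\bm E_i(\bm\Phi)\|_F^2$ directly using matrix differentials and Wirtinger calculus, since $\bm E_i$ depends on $\bm\Phi$ through the Gramian $\bm G_{\bm\Phi}=\bm\Phi^{\rm H}\bm\Phi$ and hence on both $\bm\Phi$ and $\bm\Phi^{\rm H}$. The first observation I would exploit is that $\bm E_i(\bm\Phi)$ is Hermitian, because both $\bm A_i^{\rm H}\bm G_{\bm\Phi}\bm A_i$ and $\bm A_i^{\rm H}\bm A_i$ are Hermitian. This lets me replace $\|\bm E_i\|_F^2=\mathrm{tr}(\bm E_i^{\rm H}\bm E_i)$ by $\mathrm{tr}(\bm E_i^2)$, which simplifies bookkeeping substantially.

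Next I would compute the differential. Since $\bm A_i^{\rm H}\bm A_i$ is constant in $\bm\Phi$, we have
\[
d\bm E_i \;=\; \bm A_i^{\rm H}(d\bm\Phi^{\rm H})\bm\Phi\bm A_i + \bm A_i^{\rm H}\bm\Phi^{\rm H}(d\bm\Phi)\bm A_i,
\]
which is again Hermitian. Therefore
\[
d\,\mathrm{tr}(\bm E_i^2) \;=\; 2\,\mathrm{tr}(\bm E_i\,d\bm E_i),
\]
and this quantity is real (as expected, since $\|\bm E_i\|_F^2$ is real). Substituting the two terms of $d\bm E_i$ and applying the cyclic property of the trace, the $d\bm\Phi$-contribution can be written as $\mathrm{tr}\!\bigl(\bm M^{\rm H}\,d\bm\Phi\bigr)$ with $\bm M=\bm\Phi\bm A_i\bm E_i\bm A_i^{\rm H}$, while the $d\bm\Phi^{\rm H}$-contribution becomes its complex conjugate (this is where Hermiticity of $\bm E_i$ is used once more). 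Adding the two yields $d\|\bm E_i\|_F^2 = 4\,\mathrm{Re}\,\mathrm{tr}(\bm M^{\rm H}d\bm\Phi)$.

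By the standard identification of the Wirtinger gradient of a real-valued function of a complex matrix argument, this gives
\[
\nabla_{\bm\Phi}\|\bm E_i(\bm\Phi)\|_F^2 \;=\; 4\bm\Phi\bm A_i\bm E_i\bm A_i^{\rm H}.
\]
A final expansion using $\bm E_i=\bm A_i^{\rm H}\bm G_{\bm\Phi}\bm A_i-\bm A_i^{\rm H}\bm A_i$ gives the claimed expression.

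The only delicate point, and the step I would treat most carefully, is the combinatorics of the Wirtinger calculus: keeping track of which conjugate-transpose occurrences give rise to contributions in $d\bm\Phi$ versus $d\bm\Phi^{\rm H}$, and verifying that they combine to the factor $4$ rather than $2$. Concretely, one must be sure that the $d\bm\Phi$- and $d\bm\Phi^{\rm H}$-pieces of $d\bm E_i$ yield adjoint-related coefficient matrices so that their sum equals twice the real part of a single trace, which together with the outer factor of $2$ coming from $d\,\mathrm{tr}(\bm E_i^2)=2\,\mathrm{tr}(\bm E_i\,d\bm E_i)$ produces the $4\bm\Phi\bm A_i\bm E_i\bm A_i^{\rm H}$ that underlies the stated result. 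All remaining manipulations are routine trace identities.
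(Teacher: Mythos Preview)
Your argument is correct. The Hermiticity of $\bm E_i$ is valid, the differential $d\bm E_i$ is right, the reduction $d\|\bm E_i\|_F^2 = 4\,\mathrm{Re}\,\mathrm{tr}(\bm M^{\rm H}d\bm\Phi)$ with $\bm M=\bm\Phi\bm A_i\bm E_i\bm A_i^{\rm H}$ checks out, and under the convention $df=\mathrm{Re}\,\mathrm{tr}((\nabla f)^{\rm H}d\bm\Phi)$ (which is the one implicit in the paper's quoted identity $\nabla_{\bm X}\mathrm{tr}(\bm X\bm M\bm X^{\rm H})=2\bm X\bm M$) you land exactly on $4\bm\Phi\bm A_i\bm E_i\bm A_i^{\rm H}$, whose expansion is the stated formula.

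The route differs from the paper's. The paper first expands $\|\bm E_i\|_F^2$ into three trace terms and then invokes two off-the-shelf matrix-calculus identities, one for $\mathrm{tr}(\bm M^{\rm H}\bm X^{\rm H}\bm X\bm M\bm M^{\rm H}\bm X^{\rm H}\bm X\bm M)$ and one for $\mathrm{tr}(\bm X\bm M\bm X^{\rm H})$. You instead keep $\bm E_i$ intact, use its Hermiticity to replace $\mathrm{tr}(\bm E_i^{\rm H}\bm E_i)$ by $\mathrm{tr}(\bm E_i^2)$, and differentiate once via Wirtinger differentials. Your approach is a bit more self-contained (no external formulas needed) and yields the compact intermediate form $4\bm\Phi\bm A_i\bm E_i\bm A_i^{\rm H}$, which is arguably the cleanest way to state and implement the gradient; the paper's approach has the advantage of being a direct application of tabulated identities with no bookkeeping about which conjugate pieces pair up.
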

\begin{proof}
First, we use the facts $\Vert \bm M \Vert^2_F = \Tr(\bm M^{\rm H} \bm M)$ and $\Tr(\bm A \bm B) = \Tr(\bm B \bm A)$ to get
\begin{align*}
\Vert \bm E_i(\bm\Phi) \Vert^2_F &=
    \Tr\left( \bm A_i^{\rm H} \bm\Phi^{\rm H} \bm\Phi \bm A_i \bm A_i^{\rm H} \bm \Phi^{\rm H} \bm \Phi \bm A_i \right) \\
    &-2\Tr\left(\bm A_i^{\rm H} \bm A_i \bm A_i^{\rm H} \bm \Phi^{\rm H} \bm\Phi \bm A_i\right) + \Tr\left(\bm A_i^{\rm H} \bm A_i \bm A_i^{\rm H} \bm A\right) \\
    &= \Tr\left( \bm A_i^{\rm H} \bm\Phi^{\rm H} \bm\Phi \bm A_i \bm A_i^{\rm H} \bm \Phi^{\rm H} \bm \Phi \bm A_i \right) \\
    &-2\Tr\left(\bm\Phi \bm A_i \bm A_i^{\rm H} \bm A_i \bm A_i^{\rm H} \bm \Phi^{\rm H} \right) + \Tr\left(\bm A_i^{\rm H} \bm A_i \bm A_i^{\rm H} \bm A\right).
\end{align*}
Now with two well known results from matrix calculus
\begin{align*}
\nabla_{\bm X} &\Tr\left( \bm M^{\rm H} \bm X^{\rm H} \bm X \bm M \bm M^{\rm H} \bm X^{\rm H} \bm X \bm M \right) = \\
    & 4 \bm X \bm M \bm M^{\rm H} \bm X^{\rm H} \bm X \bm M \bm M^{\rm H}
\end{align*}
and $\nabla_{\bm X} \Tr\left(\bm X \bm M \bm X^{\rm H} \right) = 2\bm X \bm M$,
we conclude the statement.
\end{proof}
Now we have all ingredients at hand to implement the proposed method with an analytically derived gradient calculation. The following section  evaluates the performance of this approach.

\section{Numerical Evaluation}
In this section, we assess the performance of the compressive array designed using our proposed \gls{sgd}-based approach in comparison with that of the uncompressed array, the compressive array obtained using the 2D \gls{scf}-based approach from \cite{Lavrenko2018} and the compressive array derived from a randomly drawn combining matrix.

In any case, we let \gls{sgd} run for $K = 5000$ steps, with $L = 250$ angles per step, step size $\alpha = 10^{-2}$ and drag parameter $\eta = 0.1$, where the distribution of the $\bm \Theta_k$ is the uniform distribution on $(0, 2\pi] \times [\pi / 4, 3\pi / 4]$. Moreover we always use normalized sensing matrices, which means that the columns of any $\bm \Phi$ considered are normalized to unit length. As an antenna we consider a \gls{suca} of $(\Sigma\times N_{S})$ isotropic elements, so it has $\Sigma = 3$ stacks of $N_{S} = 11$ elements each with the total number of elements denoted by $N = \Sigma \times N_{S}$. The array response of the \gls{suca} is given by $\bm a_{\mathrm{SUCA}}(\theta, \vartheta)$. We choose $d = 0.5 \lambda$ as the distance between two consecutive stacks, $R = 0.68 \lambda$ as the radius of the stacks, where $\lambda = {c}/{f}$ is the wave-length at frequency $f$ with $c = 3\cdot 10^8$~m/sec.
%
%
For the \gls{scf}-based approach from \cite{Lavrenko2018}, the number of grid points in azimuth and elevation used for calculation of \gls{scf} is $N_{\theta} = 121$ and $N_{\vartheta} = 61$, respectively over $\theta \in [-\pi, \pi]$ and  $\vartheta \in [-\pi/2, \pi/2]$.  Also, the number of reference points in elevation considered for optimization using this approach is $\vert\mathcal{N}\vert = 3$.

In Figure \ref{scf_error} we evaluate how the different approaches perform in terms of the overall \gls{scf}-error. To approximate the quantity in \eqref{eq:delta} we evaluate \eqref{eq:error_func} on a regular $2$D grid in azimuth and elevation and then sum the squared absolute values of these samples. This is done for different levels of compression rate $\rho$. Clearly, the proposed \gls{sgd}-based method is capable of approximating the original antenna response more closely than the previous approach and the conventional combining matrices resulting from a zero-mean Gaussian ensemble.


\begin{figure}
    \begin{tikzpicture}
    \begin{semilogyaxis}[
        height = 0.6\linewidth,
        width = \linewidth,
        grid=both,
        grid style={line width=.0pt, draw=none},
        major grid style={line width=.2pt,draw=gray!50},
        label style = {font=\small},
        ylabel = {SCF Distance},
        xlabel = {Compression Rate $\rho$},
        tick label style = {font=\tiny},
        legend style = {
            at={(0.5, -0.2)},
            anchor=north,
            legend columns=2,
            draw=none,
            fill=none
        },
        ]
    \addplot table[col sep=comma, x=rho, y=err0m] {figures/scf_error.csv};
    \addplot table[col sep=comma, x=rho, y=err1m] {figures/scf_error.csv};
    \addplot table[col sep=comma, x=rho, y=err2m] {figures/scf_error.csv};
    \legend{Random Gaussian, \gls{sgd}-based Approach, Previous Approach}
    \end{semilogyaxis}
    \end{tikzpicture}
    \vspace{-3mm}
    \caption{\small Distance of the SCFs generated by the different approaches for varying levels of compression.}\label{scf_error}
    \vspace{-6mm}
\end{figure}
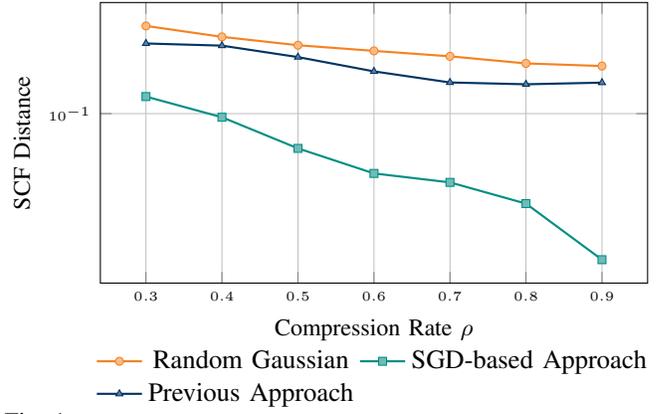

\begin{figure*}[!htb]
    \begin{center}
        \begin{tikzpicture}
            \node (origin) at (0,0) {};
            \begin{axis}[
    title=Random Gaussian,
    title style = {
        font=\bfseries,
        at = {(0.5, 0.9)},
        anchor = south
    },
    name=plot331,
    at=(origin),
    anchor=north,
    enlargelimits = false,
    axis on top = true,
    axis equal image,
    point meta min = -2.360772,
    point meta max = -1.318539,
    ,xtick=\empty,ylabel = {ele. [rad]},
    colormap={hot}{
        rgb(0.0 pt)=(0.0416,0.0,0.0),
        rgb(0.010101010101010102 pt)=(0.0621896881088697,0.0,0.0),
        rgb(0.020202020202020204 pt)=(0.09307422027217424,0.0,0.0),
        rgb(0.030303030303030304 pt)=(0.11366390838104393,0.0,0.0),
        rgb(0.04040404040404041 pt)=(0.14454844054434846,0.0,0.0),
        rgb(0.050505050505050504 pt)=(0.1651381286532182,0.0,0.0),
        rgb(0.06060606060606061 pt)=(0.1960226608165227,0.0,0.0),
        rgb(0.0707070707070707 pt)=(0.22690719297982725,0.0,0.0),
        rgb(0.08080808080808081 pt)=(0.24749688108869694,0.0,0.0),
        rgb(0.09090909090909091 pt)=(0.2783814132520015,0.0,0.0),
        rgb(0.10101010101010101 pt)=(0.2989711013608711,0.0,0.0),
        rgb(0.1111111111111111 pt)=(0.3298556335241757,0.0,0.0),
        rgb(0.12121212121212122 pt)=(0.36074016568748024,0.0,0.0),
        rgb(0.13131313131313133 pt)=(0.38132985379634987,0.0,0.0),
        rgb(0.1414141414141414 pt)=(0.41221438595965454,0.0,0.0),
        rgb(0.15151515151515152 pt)=(0.43280407406852417,0.0,0.0),
        rgb(0.16161616161616163 pt)=(0.4636886062318286,0.0,0.0),
        rgb(0.1717171717171717 pt)=(0.48427829434069847,0.0,0.0),
        rgb(0.18181818181818182 pt)=(0.5151628265040029,0.0,0.0),
        rgb(0.1919191919191919 pt)=(0.5460473586673074,0.0,0.0),
        rgb(0.20202020202020202 pt)=(0.5666370467761772,0.0,0.0),
        rgb(0.21212121212121213 pt)=(0.5975215789394817,0.0,0.0),
        rgb(0.2222222222222222 pt)=(0.6181112670483514,0.0,0.0),
        rgb(0.23232323232323232 pt)=(0.6489957992116561,0.0,0.0),
        rgb(0.24242424242424243 pt)=(0.6798803313749605,0.0,0.0),
        rgb(0.25252525252525254 pt)=(0.7004700194838303,0.0,0.0),
        rgb(0.26262626262626265 pt)=(0.7313545516471347,0.0,0.0),
        rgb(0.2727272727272727 pt)=(0.7519442397560044,0.0,0.0),
        rgb(0.2828282828282828 pt)=(0.782828771919309,0.0,0.0),
        rgb(0.29292929292929293 pt)=(0.8034184600281785,0.0,0.0),
        rgb(0.30303030303030304 pt)=(0.8343029921914833,0.0,0.0),
        rgb(0.31313131313131315 pt)=(0.8651875243547877,0.0,0.0),
        rgb(0.32323232323232326 pt)=(0.8857772124636573,0.0,0.0),
        rgb(0.3333333333333333 pt)=(0.9166617446269619,0.0,0.0),
        rgb(0.3434343434343434 pt)=(0.9372514327358317,0.0,0.0),
        rgb(0.35353535353535354 pt)=(0.9681359648991361,0.0,0.0),
        rgb(0.36363636363636365 pt)=(0.9990204970624408,0.0,0.0),
        rgb(0.37373737373737376 pt)=(1.0,0.019608769606337603,0.0),
        rgb(0.3838383838383838 pt)=(1.0,0.05049107236377195,0.0),
        rgb(0.3939393939393939 pt)=(1.0,0.07107927420206175,0.0),
        rgb(0.40404040404040403 pt)=(1.0,0.10196157695949624,0.0),
        rgb(0.41414141414141414 pt)=(1.0,0.1328438797169306,0.0),
        rgb(0.42424242424242425 pt)=(1.0,0.1534320815552204,0.0),
        rgb(0.43434343434343436 pt)=(1.0,0.1843143843126549,0.0),
        rgb(0.4444444444444444 pt)=(1.0,0.20490258615094456,0.0),
        rgb(0.45454545454545453 pt)=(1.0,0.23578488890837906,0.0),
        rgb(0.46464646464646464 pt)=(1.0,0.2563730907466687,0.0),
        rgb(0.47474747474747475 pt)=(1.0,0.28725539350410323,0.0),
        rgb(0.48484848484848486 pt)=(1.0,0.3181376962615377,0.0),
        rgb(0.494949494949495 pt)=(1.0,0.33872589809982734,0.0),
        rgb(0.5050505050505051 pt)=(1.0,0.36960820085726187,0.0),
        rgb(0.5151515151515151 pt)=(1.0,0.3901964026955515,0.0),
        rgb(0.5252525252525253 pt)=(1.0,0.421078705452986,0.0),
        rgb(0.5353535353535354 pt)=(1.0,0.4519610082104205,0.0),
        rgb(0.5454545454545454 pt)=(1.0,0.47254921004871014,0.0),
        rgb(0.5555555555555556 pt)=(1.0,0.5034315128061446,0.0),
        rgb(0.5656565656565656 pt)=(1.0,0.5240197146444343,0.0),
        rgb(0.5757575757575758 pt)=(1.0,0.5549020174018688,0.0),
        rgb(0.5858585858585859 pt)=(1.0,0.5754902192401584,0.0),
        rgb(0.5959595959595959 pt)=(1.0,0.606372521997593,0.0),
        rgb(0.6060606060606061 pt)=(1.0,0.6372548247550275,0.0),
        rgb(0.6161616161616161 pt)=(1.0,0.6578430265933172,0.0),
        rgb(0.6262626262626263 pt)=(1.0,0.6887253293507516,0.0),
        rgb(0.6363636363636364 pt)=(1.0,0.7093135311890413,0.0),
        rgb(0.6464646464646465 pt)=(1.0,0.7401958339464758,0.0),
        rgb(0.6565656565656566 pt)=(1.0,0.7710781367039102,0.0),
        rgb(0.6666666666666666 pt)=(1.0,0.7916663385421999,0.0),
        rgb(0.6767676767676768 pt)=(1.0,0.8225486412996345,0.0),
        rgb(0.6868686868686869 pt)=(1.0,0.843136843137924,0.0),
        rgb(0.696969696969697 pt)=(1.0,0.8740191458953586,0.0),
        rgb(0.7070707070707071 pt)=(1.0,0.9049014486527931,0.0),
        rgb(0.7171717171717171 pt)=(1.0,0.9254896504910827,0.0),
        rgb(0.7272727272727273 pt)=(1.0,0.9563719532485172,0.0),
        rgb(0.7373737373737373 pt)=(1.0,0.9769601550868069,0.0),
        rgb(0.7474747474747475 pt)=(1.0,1.0,0.011763717646070686),
        rgb(0.7575757575757576 pt)=(1.0,1.0,0.04264610146963098),
        rgb(0.7676767676767676 pt)=(1.0,1.0,0.08896967720497098),
        rgb(0.7777777777777778 pt)=(1.0,1.0,0.13529325294031186),
        rgb(0.7878787878787878 pt)=(1.0,1.0,0.16617563676387215),
        rgb(0.797979797979798 pt)=(1.0,1.0,0.21249921249921258),
        rgb(0.8080808080808081 pt)=(1.0,1.0,0.24338159632277287),
        rgb(0.8181818181818182 pt)=(1.0,1.0,0.2897051720581133),
        rgb(0.8282828282828283 pt)=(1.0,1.0,0.3360287477934533),
        rgb(0.8383838383838383 pt)=(1.0,1.0,0.36691113161701405),
        rgb(0.8484848484848485 pt)=(1.0,1.0,0.41323470735235446),
        rgb(0.8585858585858586 pt)=(1.0,1.0,0.44411709117591475),
        rgb(0.8686868686868687 pt)=(1.0,1.0,0.4904406669112552),
        rgb(0.8787878787878788 pt)=(1.0,1.0,0.5213230507348154),
        rgb(0.8888888888888888 pt)=(1.0,1.0,0.567646626470156),
        rgb(0.898989898989899 pt)=(1.0,1.0,0.6139702022054964),
        rgb(0.9090909090909091 pt)=(1.0,1.0,0.6448525860290567),
        rgb(0.9191919191919192 pt)=(1.0,1.0,0.6911761617643971),
        rgb(0.9292929292929293 pt)=(1.0,1.0,0.7220585455879573),
        rgb(0.9393939393939394 pt)=(1.0,1.0,0.7683821213232979),
        rgb(0.9494949494949495 pt)=(1.0,1.0,0.8147056970586383),
        rgb(0.9595959595959596 pt)=(1.0,1.0,0.8455880808821985),
        rgb(0.9696969696969697 pt)=(1.0,1.0,0.891911656617539),
        rgb(0.9797979797979798 pt)=(1.0,1.0,0.9227940404410993),
        rgb(0.98989898989899 pt)=(1.0,1.0,0.9691176161764397),
        rgb(1.0 pt)=(1.0,1.0,1.0),},
    label style = {font=\small},
    xticklabel style = {font=\tiny},
    colorbar style ={
        tick label style = {font=\tiny},
        xticklabel={$10^{
        \pgfmathparse{\tick}
        \pgfmathprintnumber[precision=2]{\pgfmathresult}}$},
        at = {(1.03,1)},
        anchor = north west,
    }
    ]
    \addplot graphics [
        xmin = -3.141593,
        xmax = 3.141593,
        ymin = -0.785398,
        ymax = 0.785398
    ] {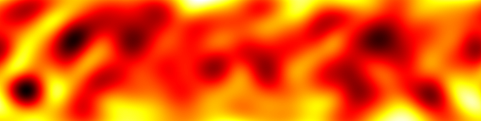};
\end{axis}
            \begin{axis}[
    title=,
    title style = {
        font=\bfseries,
        at = {(0.5, 0.9)},
        anchor = south
    },
    name=plot332,
    at=(plot331.below south),
    anchor=north,
    enlargelimits = false,
    axis on top = true,
    axis equal image,
    point meta min = -0.344911,
    point meta max = 2.098015,
    ,xtick=\empty,ylabel = {ele. [rad]},
    colormap={hot}{
        rgb(0.0 pt)=(0.0416,0.0,0.0),
        rgb(0.010101010101010102 pt)=(0.0621896881088697,0.0,0.0),
        rgb(0.020202020202020204 pt)=(0.09307422027217424,0.0,0.0),
        rgb(0.030303030303030304 pt)=(0.11366390838104393,0.0,0.0),
        rgb(0.04040404040404041 pt)=(0.14454844054434846,0.0,0.0),
        rgb(0.050505050505050504 pt)=(0.1651381286532182,0.0,0.0),
        rgb(0.06060606060606061 pt)=(0.1960226608165227,0.0,0.0),
        rgb(0.0707070707070707 pt)=(0.22690719297982725,0.0,0.0),
        rgb(0.08080808080808081 pt)=(0.24749688108869694,0.0,0.0),
        rgb(0.09090909090909091 pt)=(0.2783814132520015,0.0,0.0),
        rgb(0.10101010101010101 pt)=(0.2989711013608711,0.0,0.0),
        rgb(0.1111111111111111 pt)=(0.3298556335241757,0.0,0.0),
        rgb(0.12121212121212122 pt)=(0.36074016568748024,0.0,0.0),
        rgb(0.13131313131313133 pt)=(0.38132985379634987,0.0,0.0),
        rgb(0.1414141414141414 pt)=(0.41221438595965454,0.0,0.0),
        rgb(0.15151515151515152 pt)=(0.43280407406852417,0.0,0.0),
        rgb(0.16161616161616163 pt)=(0.4636886062318286,0.0,0.0),
        rgb(0.1717171717171717 pt)=(0.48427829434069847,0.0,0.0),
        rgb(0.18181818181818182 pt)=(0.5151628265040029,0.0,0.0),
        rgb(0.1919191919191919 pt)=(0.5460473586673074,0.0,0.0),
        rgb(0.20202020202020202 pt)=(0.5666370467761772,0.0,0.0),
        rgb(0.21212121212121213 pt)=(0.5975215789394817,0.0,0.0),
        rgb(0.2222222222222222 pt)=(0.6181112670483514,0.0,0.0),
        rgb(0.23232323232323232 pt)=(0.6489957992116561,0.0,0.0),
        rgb(0.24242424242424243 pt)=(0.6798803313749605,0.0,0.0),
        rgb(0.25252525252525254 pt)=(0.7004700194838303,0.0,0.0),
        rgb(0.26262626262626265 pt)=(0.7313545516471347,0.0,0.0),
        rgb(0.2727272727272727 pt)=(0.7519442397560044,0.0,0.0),
        rgb(0.2828282828282828 pt)=(0.782828771919309,0.0,0.0),
        rgb(0.29292929292929293 pt)=(0.8034184600281785,0.0,0.0),
        rgb(0.30303030303030304 pt)=(0.8343029921914833,0.0,0.0),
        rgb(0.31313131313131315 pt)=(0.8651875243547877,0.0,0.0),
        rgb(0.32323232323232326 pt)=(0.8857772124636573,0.0,0.0),
        rgb(0.3333333333333333 pt)=(0.9166617446269619,0.0,0.0),
        rgb(0.3434343434343434 pt)=(0.9372514327358317,0.0,0.0),
        rgb(0.35353535353535354 pt)=(0.9681359648991361,0.0,0.0),
        rgb(0.36363636363636365 pt)=(0.9990204970624408,0.0,0.0),
        rgb(0.37373737373737376 pt)=(1.0,0.019608769606337603,0.0),
        rgb(0.3838383838383838 pt)=(1.0,0.05049107236377195,0.0),
        rgb(0.3939393939393939 pt)=(1.0,0.07107927420206175,0.0),
        rgb(0.40404040404040403 pt)=(1.0,0.10196157695949624,0.0),
        rgb(0.41414141414141414 pt)=(1.0,0.1328438797169306,0.0),
        rgb(0.42424242424242425 pt)=(1.0,0.1534320815552204,0.0),
        rgb(0.43434343434343436 pt)=(1.0,0.1843143843126549,0.0),
        rgb(0.4444444444444444 pt)=(1.0,0.20490258615094456,0.0),
        rgb(0.45454545454545453 pt)=(1.0,0.23578488890837906,0.0),
        rgb(0.46464646464646464 pt)=(1.0,0.2563730907466687,0.0),
        rgb(0.47474747474747475 pt)=(1.0,0.28725539350410323,0.0),
        rgb(0.48484848484848486 pt)=(1.0,0.3181376962615377,0.0),
        rgb(0.494949494949495 pt)=(1.0,0.33872589809982734,0.0),
        rgb(0.5050505050505051 pt)=(1.0,0.36960820085726187,0.0),
        rgb(0.5151515151515151 pt)=(1.0,0.3901964026955515,0.0),
        rgb(0.5252525252525253 pt)=(1.0,0.421078705452986,0.0),
        rgb(0.5353535353535354 pt)=(1.0,0.4519610082104205,0.0),
        rgb(0.5454545454545454 pt)=(1.0,0.47254921004871014,0.0),
        rgb(0.5555555555555556 pt)=(1.0,0.5034315128061446,0.0),
        rgb(0.5656565656565656 pt)=(1.0,0.5240197146444343,0.0),
        rgb(0.5757575757575758 pt)=(1.0,0.5549020174018688,0.0),
        rgb(0.5858585858585859 pt)=(1.0,0.5754902192401584,0.0),
        rgb(0.5959595959595959 pt)=(1.0,0.606372521997593,0.0),
        rgb(0.6060606060606061 pt)=(1.0,0.6372548247550275,0.0),
        rgb(0.6161616161616161 pt)=(1.0,0.6578430265933172,0.0),
        rgb(0.6262626262626263 pt)=(1.0,0.6887253293507516,0.0),
        rgb(0.6363636363636364 pt)=(1.0,0.7093135311890413,0.0),
        rgb(0.6464646464646465 pt)=(1.0,0.7401958339464758,0.0),
        rgb(0.6565656565656566 pt)=(1.0,0.7710781367039102,0.0),
        rgb(0.6666666666666666 pt)=(1.0,0.7916663385421999,0.0),
        rgb(0.6767676767676768 pt)=(1.0,0.8225486412996345,0.0),
        rgb(0.6868686868686869 pt)=(1.0,0.843136843137924,0.0),
        rgb(0.696969696969697 pt)=(1.0,0.8740191458953586,0.0),
        rgb(0.7070707070707071 pt)=(1.0,0.9049014486527931,0.0),
        rgb(0.7171717171717171 pt)=(1.0,0.9254896504910827,0.0),
        rgb(0.7272727272727273 pt)=(1.0,0.9563719532485172,0.0),
        rgb(0.7373737373737373 pt)=(1.0,0.9769601550868069,0.0),
        rgb(0.7474747474747475 pt)=(1.0,1.0,0.011763717646070686),
        rgb(0.7575757575757576 pt)=(1.0,1.0,0.04264610146963098),
        rgb(0.7676767676767676 pt)=(1.0,1.0,0.08896967720497098),
        rgb(0.7777777777777778 pt)=(1.0,1.0,0.13529325294031186),
        rgb(0.7878787878787878 pt)=(1.0,1.0,0.16617563676387215),
        rgb(0.797979797979798 pt)=(1.0,1.0,0.21249921249921258),
        rgb(0.8080808080808081 pt)=(1.0,1.0,0.24338159632277287),
        rgb(0.8181818181818182 pt)=(1.0,1.0,0.2897051720581133),
        rgb(0.8282828282828283 pt)=(1.0,1.0,0.3360287477934533),
        rgb(0.8383838383838383 pt)=(1.0,1.0,0.36691113161701405),
        rgb(0.8484848484848485 pt)=(1.0,1.0,0.41323470735235446),
        rgb(0.8585858585858586 pt)=(1.0,1.0,0.44411709117591475),
        rgb(0.8686868686868687 pt)=(1.0,1.0,0.4904406669112552),
        rgb(0.8787878787878788 pt)=(1.0,1.0,0.5213230507348154),
        rgb(0.8888888888888888 pt)=(1.0,1.0,0.567646626470156),
        rgb(0.898989898989899 pt)=(1.0,1.0,0.6139702022054964),
        rgb(0.9090909090909091 pt)=(1.0,1.0,0.6448525860290567),
        rgb(0.9191919191919192 pt)=(1.0,1.0,0.6911761617643971),
        rgb(0.9292929292929293 pt)=(1.0,1.0,0.7220585455879573),
        rgb(0.9393939393939394 pt)=(1.0,1.0,0.7683821213232979),
        rgb(0.9494949494949495 pt)=(1.0,1.0,0.8147056970586383),
        rgb(0.9595959595959596 pt)=(1.0,1.0,0.8455880808821985),
        rgb(0.9696969696969697 pt)=(1.0,1.0,0.891911656617539),
        rgb(0.9797979797979798 pt)=(1.0,1.0,0.9227940404410993),
        rgb(0.98989898989899 pt)=(1.0,1.0,0.9691176161764397),
        rgb(1.0 pt)=(1.0,1.0,1.0),},
    label style = {font=\small},
    xticklabel style = {font=\tiny},
    colorbar style ={
        tick label style = {font=\tiny},
        xticklabel={$10^{
        \pgfmathparse{\tick}
        \pgfmathprintnumber[precision=2]{\pgfmathresult}}$},
        at = {(1.03,1)},
        anchor = north west,
    }
    ]
    \addplot graphics [
        xmin = -3.141593,
        xmax = 3.141593,
        ymin = -0.785398,
        ymax = 0.785398
    ] {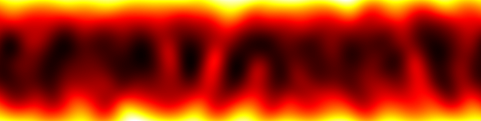};
\end{axis}
            \begin{axis}[
    title=,
    title style = {
        font=\bfseries,
        at = {(0.5, 0.9)},
        anchor = south
    },
    name=plot333,
    at=(plot332.below south),
    anchor=north,
    enlargelimits = false,
    axis on top = true,
    axis equal image,
    point meta min = -1.215294,
    point meta max = 1.407569,
    ,xlabel = {az. [rad]},ylabel = {ele. [rad]},
    colormap={hot}{
        rgb(0.0 pt)=(0.0416,0.0,0.0),
        rgb(0.010101010101010102 pt)=(0.0621896881088697,0.0,0.0),
        rgb(0.020202020202020204 pt)=(0.09307422027217424,0.0,0.0),
        rgb(0.030303030303030304 pt)=(0.11366390838104393,0.0,0.0),
        rgb(0.04040404040404041 pt)=(0.14454844054434846,0.0,0.0),
        rgb(0.050505050505050504 pt)=(0.1651381286532182,0.0,0.0),
        rgb(0.06060606060606061 pt)=(0.1960226608165227,0.0,0.0),
        rgb(0.0707070707070707 pt)=(0.22690719297982725,0.0,0.0),
        rgb(0.08080808080808081 pt)=(0.24749688108869694,0.0,0.0),
        rgb(0.09090909090909091 pt)=(0.2783814132520015,0.0,0.0),
        rgb(0.10101010101010101 pt)=(0.2989711013608711,0.0,0.0),
        rgb(0.1111111111111111 pt)=(0.3298556335241757,0.0,0.0),
        rgb(0.12121212121212122 pt)=(0.36074016568748024,0.0,0.0),
        rgb(0.13131313131313133 pt)=(0.38132985379634987,0.0,0.0),
        rgb(0.1414141414141414 pt)=(0.41221438595965454,0.0,0.0),
        rgb(0.15151515151515152 pt)=(0.43280407406852417,0.0,0.0),
        rgb(0.16161616161616163 pt)=(0.4636886062318286,0.0,0.0),
        rgb(0.1717171717171717 pt)=(0.48427829434069847,0.0,0.0),
        rgb(0.18181818181818182 pt)=(0.5151628265040029,0.0,0.0),
        rgb(0.1919191919191919 pt)=(0.5460473586673074,0.0,0.0),
        rgb(0.20202020202020202 pt)=(0.5666370467761772,0.0,0.0),
        rgb(0.21212121212121213 pt)=(0.5975215789394817,0.0,0.0),
        rgb(0.2222222222222222 pt)=(0.6181112670483514,0.0,0.0),
        rgb(0.23232323232323232 pt)=(0.6489957992116561,0.0,0.0),
        rgb(0.24242424242424243 pt)=(0.6798803313749605,0.0,0.0),
        rgb(0.25252525252525254 pt)=(0.7004700194838303,0.0,0.0),
        rgb(0.26262626262626265 pt)=(0.7313545516471347,0.0,0.0),
        rgb(0.2727272727272727 pt)=(0.7519442397560044,0.0,0.0),
        rgb(0.2828282828282828 pt)=(0.782828771919309,0.0,0.0),
        rgb(0.29292929292929293 pt)=(0.8034184600281785,0.0,0.0),
        rgb(0.30303030303030304 pt)=(0.8343029921914833,0.0,0.0),
        rgb(0.31313131313131315 pt)=(0.8651875243547877,0.0,0.0),
        rgb(0.32323232323232326 pt)=(0.8857772124636573,0.0,0.0),
        rgb(0.3333333333333333 pt)=(0.9166617446269619,0.0,0.0),
        rgb(0.3434343434343434 pt)=(0.9372514327358317,0.0,0.0),
        rgb(0.35353535353535354 pt)=(0.9681359648991361,0.0,0.0),
        rgb(0.36363636363636365 pt)=(0.9990204970624408,0.0,0.0),
        rgb(0.37373737373737376 pt)=(1.0,0.019608769606337603,0.0),
        rgb(0.3838383838383838 pt)=(1.0,0.05049107236377195,0.0),
        rgb(0.3939393939393939 pt)=(1.0,0.07107927420206175,0.0),
        rgb(0.40404040404040403 pt)=(1.0,0.10196157695949624,0.0),
        rgb(0.41414141414141414 pt)=(1.0,0.1328438797169306,0.0),
        rgb(0.42424242424242425 pt)=(1.0,0.1534320815552204,0.0),
        rgb(0.43434343434343436 pt)=(1.0,0.1843143843126549,0.0),
        rgb(0.4444444444444444 pt)=(1.0,0.20490258615094456,0.0),
        rgb(0.45454545454545453 pt)=(1.0,0.23578488890837906,0.0),
        rgb(0.46464646464646464 pt)=(1.0,0.2563730907466687,0.0),
        rgb(0.47474747474747475 pt)=(1.0,0.28725539350410323,0.0),
        rgb(0.48484848484848486 pt)=(1.0,0.3181376962615377,0.0),
        rgb(0.494949494949495 pt)=(1.0,0.33872589809982734,0.0),
        rgb(0.5050505050505051 pt)=(1.0,0.36960820085726187,0.0),
        rgb(0.5151515151515151 pt)=(1.0,0.3901964026955515,0.0),
        rgb(0.5252525252525253 pt)=(1.0,0.421078705452986,0.0),
        rgb(0.5353535353535354 pt)=(1.0,0.4519610082104205,0.0),
        rgb(0.5454545454545454 pt)=(1.0,0.47254921004871014,0.0),
        rgb(0.5555555555555556 pt)=(1.0,0.5034315128061446,0.0),
        rgb(0.5656565656565656 pt)=(1.0,0.5240197146444343,0.0),
        rgb(0.5757575757575758 pt)=(1.0,0.5549020174018688,0.0),
        rgb(0.5858585858585859 pt)=(1.0,0.5754902192401584,0.0),
        rgb(0.5959595959595959 pt)=(1.0,0.606372521997593,0.0),
        rgb(0.6060606060606061 pt)=(1.0,0.6372548247550275,0.0),
        rgb(0.6161616161616161 pt)=(1.0,0.6578430265933172,0.0),
        rgb(0.6262626262626263 pt)=(1.0,0.6887253293507516,0.0),
        rgb(0.6363636363636364 pt)=(1.0,0.7093135311890413,0.0),
        rgb(0.6464646464646465 pt)=(1.0,0.7401958339464758,0.0),
        rgb(0.6565656565656566 pt)=(1.0,0.7710781367039102,0.0),
        rgb(0.6666666666666666 pt)=(1.0,0.7916663385421999,0.0),
        rgb(0.6767676767676768 pt)=(1.0,0.8225486412996345,0.0),
        rgb(0.6868686868686869 pt)=(1.0,0.843136843137924,0.0),
        rgb(0.696969696969697 pt)=(1.0,0.8740191458953586,0.0),
        rgb(0.7070707070707071 pt)=(1.0,0.9049014486527931,0.0),
        rgb(0.7171717171717171 pt)=(1.0,0.9254896504910827,0.0),
        rgb(0.7272727272727273 pt)=(1.0,0.9563719532485172,0.0),
        rgb(0.7373737373737373 pt)=(1.0,0.9769601550868069,0.0),
        rgb(0.7474747474747475 pt)=(1.0,1.0,0.011763717646070686),
        rgb(0.7575757575757576 pt)=(1.0,1.0,0.04264610146963098),
        rgb(0.7676767676767676 pt)=(1.0,1.0,0.08896967720497098),
        rgb(0.7777777777777778 pt)=(1.0,1.0,0.13529325294031186),
        rgb(0.7878787878787878 pt)=(1.0,1.0,0.16617563676387215),
        rgb(0.797979797979798 pt)=(1.0,1.0,0.21249921249921258),
        rgb(0.8080808080808081 pt)=(1.0,1.0,0.24338159632277287),
        rgb(0.8181818181818182 pt)=(1.0,1.0,0.2897051720581133),
        rgb(0.8282828282828283 pt)=(1.0,1.0,0.3360287477934533),
        rgb(0.8383838383838383 pt)=(1.0,1.0,0.36691113161701405),
        rgb(0.8484848484848485 pt)=(1.0,1.0,0.41323470735235446),
        rgb(0.8585858585858586 pt)=(1.0,1.0,0.44411709117591475),
        rgb(0.8686868686868687 pt)=(1.0,1.0,0.4904406669112552),
        rgb(0.8787878787878788 pt)=(1.0,1.0,0.5213230507348154),
        rgb(0.8888888888888888 pt)=(1.0,1.0,0.567646626470156),
        rgb(0.898989898989899 pt)=(1.0,1.0,0.6139702022054964),
        rgb(0.9090909090909091 pt)=(1.0,1.0,0.6448525860290567),
        rgb(0.9191919191919192 pt)=(1.0,1.0,0.6911761617643971),
        rgb(0.9292929292929293 pt)=(1.0,1.0,0.7220585455879573),
        rgb(0.9393939393939394 pt)=(1.0,1.0,0.7683821213232979),
        rgb(0.9494949494949495 pt)=(1.0,1.0,0.8147056970586383),
        rgb(0.9595959595959596 pt)=(1.0,1.0,0.8455880808821985),
        rgb(0.9696969696969697 pt)=(1.0,1.0,0.891911656617539),
        rgb(0.9797979797979798 pt)=(1.0,1.0,0.9227940404410993),
        rgb(0.98989898989899 pt)=(1.0,1.0,0.9691176161764397),
        rgb(1.0 pt)=(1.0,1.0,1.0),},
    label style = {font=\small},
    xticklabel style = {font=\tiny},
    colorbar style ={
        tick label style = {font=\tiny},
        xticklabel={$10^{
        \pgfmathparse{\tick}
        \pgfmathprintnumber[precision=2]{\pgfmathresult}}$},
        at = {(1.03,1)},
        anchor = north west,
    }
    ]
    \addplot graphics [
        xmin = -3.141593,
        xmax = 3.141593,
        ymin = -0.785398,
        ymax = 0.785398
    ] {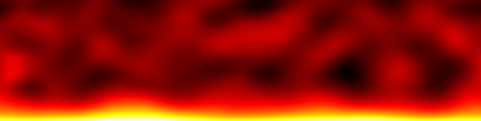};
\end{axis}
            \begin{axis}[
    title=Previous Approach,
    title style = {
        font=\bfseries,
        at = {(0.5, 0.9)},
        anchor = south
    },
    name=plot334,
    at=(plot331.right of east),
    anchor=left of west,
    enlargelimits = false,
    axis on top = true,
    axis equal image,
    point meta min = -2.360772,
    point meta max = -1.318539,
    colorbar,xtick=\empty,ytick=\empty,
    colormap={hot}{
        rgb(0.0 pt)=(0.0416,0.0,0.0),
        rgb(0.010101010101010102 pt)=(0.0621896881088697,0.0,0.0),
        rgb(0.020202020202020204 pt)=(0.09307422027217424,0.0,0.0),
        rgb(0.030303030303030304 pt)=(0.11366390838104393,0.0,0.0),
        rgb(0.04040404040404041 pt)=(0.14454844054434846,0.0,0.0),
        rgb(0.050505050505050504 pt)=(0.1651381286532182,0.0,0.0),
        rgb(0.06060606060606061 pt)=(0.1960226608165227,0.0,0.0),
        rgb(0.0707070707070707 pt)=(0.22690719297982725,0.0,0.0),
        rgb(0.08080808080808081 pt)=(0.24749688108869694,0.0,0.0),
        rgb(0.09090909090909091 pt)=(0.2783814132520015,0.0,0.0),
        rgb(0.10101010101010101 pt)=(0.2989711013608711,0.0,0.0),
        rgb(0.1111111111111111 pt)=(0.3298556335241757,0.0,0.0),
        rgb(0.12121212121212122 pt)=(0.36074016568748024,0.0,0.0),
        rgb(0.13131313131313133 pt)=(0.38132985379634987,0.0,0.0),
        rgb(0.1414141414141414 pt)=(0.41221438595965454,0.0,0.0),
        rgb(0.15151515151515152 pt)=(0.43280407406852417,0.0,0.0),
        rgb(0.16161616161616163 pt)=(0.4636886062318286,0.0,0.0),
        rgb(0.1717171717171717 pt)=(0.48427829434069847,0.0,0.0),
        rgb(0.18181818181818182 pt)=(0.5151628265040029,0.0,0.0),
        rgb(0.1919191919191919 pt)=(0.5460473586673074,0.0,0.0),
        rgb(0.20202020202020202 pt)=(0.5666370467761772,0.0,0.0),
        rgb(0.21212121212121213 pt)=(0.5975215789394817,0.0,0.0),
        rgb(0.2222222222222222 pt)=(0.6181112670483514,0.0,0.0),
        rgb(0.23232323232323232 pt)=(0.6489957992116561,0.0,0.0),
        rgb(0.24242424242424243 pt)=(0.6798803313749605,0.0,0.0),
        rgb(0.25252525252525254 pt)=(0.7004700194838303,0.0,0.0),
        rgb(0.26262626262626265 pt)=(0.7313545516471347,0.0,0.0),
        rgb(0.2727272727272727 pt)=(0.7519442397560044,0.0,0.0),
        rgb(0.2828282828282828 pt)=(0.782828771919309,0.0,0.0),
        rgb(0.29292929292929293 pt)=(0.8034184600281785,0.0,0.0),
        rgb(0.30303030303030304 pt)=(0.8343029921914833,0.0,0.0),
        rgb(0.31313131313131315 pt)=(0.8651875243547877,0.0,0.0),
        rgb(0.32323232323232326 pt)=(0.8857772124636573,0.0,0.0),
        rgb(0.3333333333333333 pt)=(0.9166617446269619,0.0,0.0),
        rgb(0.3434343434343434 pt)=(0.9372514327358317,0.0,0.0),
        rgb(0.35353535353535354 pt)=(0.9681359648991361,0.0,0.0),
        rgb(0.36363636363636365 pt)=(0.9990204970624408,0.0,0.0),
        rgb(0.37373737373737376 pt)=(1.0,0.019608769606337603,0.0),
        rgb(0.3838383838383838 pt)=(1.0,0.05049107236377195,0.0),
        rgb(0.3939393939393939 pt)=(1.0,0.07107927420206175,0.0),
        rgb(0.40404040404040403 pt)=(1.0,0.10196157695949624,0.0),
        rgb(0.41414141414141414 pt)=(1.0,0.1328438797169306,0.0),
        rgb(0.42424242424242425 pt)=(1.0,0.1534320815552204,0.0),
        rgb(0.43434343434343436 pt)=(1.0,0.1843143843126549,0.0),
        rgb(0.4444444444444444 pt)=(1.0,0.20490258615094456,0.0),
        rgb(0.45454545454545453 pt)=(1.0,0.23578488890837906,0.0),
        rgb(0.46464646464646464 pt)=(1.0,0.2563730907466687,0.0),
        rgb(0.47474747474747475 pt)=(1.0,0.28725539350410323,0.0),
        rgb(0.48484848484848486 pt)=(1.0,0.3181376962615377,0.0),
        rgb(0.494949494949495 pt)=(1.0,0.33872589809982734,0.0),
        rgb(0.5050505050505051 pt)=(1.0,0.36960820085726187,0.0),
        rgb(0.5151515151515151 pt)=(1.0,0.3901964026955515,0.0),
        rgb(0.5252525252525253 pt)=(1.0,0.421078705452986,0.0),
        rgb(0.5353535353535354 pt)=(1.0,0.4519610082104205,0.0),
        rgb(0.5454545454545454 pt)=(1.0,0.47254921004871014,0.0),
        rgb(0.5555555555555556 pt)=(1.0,0.5034315128061446,0.0),
        rgb(0.5656565656565656 pt)=(1.0,0.5240197146444343,0.0),
        rgb(0.5757575757575758 pt)=(1.0,0.5549020174018688,0.0),
        rgb(0.5858585858585859 pt)=(1.0,0.5754902192401584,0.0),
        rgb(0.5959595959595959 pt)=(1.0,0.606372521997593,0.0),
        rgb(0.6060606060606061 pt)=(1.0,0.6372548247550275,0.0),
        rgb(0.6161616161616161 pt)=(1.0,0.6578430265933172,0.0),
        rgb(0.6262626262626263 pt)=(1.0,0.6887253293507516,0.0),
        rgb(0.6363636363636364 pt)=(1.0,0.7093135311890413,0.0),
        rgb(0.6464646464646465 pt)=(1.0,0.7401958339464758,0.0),
        rgb(0.6565656565656566 pt)=(1.0,0.7710781367039102,0.0),
        rgb(0.6666666666666666 pt)=(1.0,0.7916663385421999,0.0),
        rgb(0.6767676767676768 pt)=(1.0,0.8225486412996345,0.0),
        rgb(0.6868686868686869 pt)=(1.0,0.843136843137924,0.0),
        rgb(0.696969696969697 pt)=(1.0,0.8740191458953586,0.0),
        rgb(0.7070707070707071 pt)=(1.0,0.9049014486527931,0.0),
        rgb(0.7171717171717171 pt)=(1.0,0.9254896504910827,0.0),
        rgb(0.7272727272727273 pt)=(1.0,0.9563719532485172,0.0),
        rgb(0.7373737373737373 pt)=(1.0,0.9769601550868069,0.0),
        rgb(0.7474747474747475 pt)=(1.0,1.0,0.011763717646070686),
        rgb(0.7575757575757576 pt)=(1.0,1.0,0.04264610146963098),
        rgb(0.7676767676767676 pt)=(1.0,1.0,0.08896967720497098),
        rgb(0.7777777777777778 pt)=(1.0,1.0,0.13529325294031186),
        rgb(0.7878787878787878 pt)=(1.0,1.0,0.16617563676387215),
        rgb(0.797979797979798 pt)=(1.0,1.0,0.21249921249921258),
        rgb(0.8080808080808081 pt)=(1.0,1.0,0.24338159632277287),
        rgb(0.8181818181818182 pt)=(1.0,1.0,0.2897051720581133),
        rgb(0.8282828282828283 pt)=(1.0,1.0,0.3360287477934533),
        rgb(0.8383838383838383 pt)=(1.0,1.0,0.36691113161701405),
        rgb(0.8484848484848485 pt)=(1.0,1.0,0.41323470735235446),
        rgb(0.8585858585858586 pt)=(1.0,1.0,0.44411709117591475),
        rgb(0.8686868686868687 pt)=(1.0,1.0,0.4904406669112552),
        rgb(0.8787878787878788 pt)=(1.0,1.0,0.5213230507348154),
        rgb(0.8888888888888888 pt)=(1.0,1.0,0.567646626470156),
        rgb(0.898989898989899 pt)=(1.0,1.0,0.6139702022054964),
        rgb(0.9090909090909091 pt)=(1.0,1.0,0.6448525860290567),
        rgb(0.9191919191919192 pt)=(1.0,1.0,0.6911761617643971),
        rgb(0.9292929292929293 pt)=(1.0,1.0,0.7220585455879573),
        rgb(0.9393939393939394 pt)=(1.0,1.0,0.7683821213232979),
        rgb(0.9494949494949495 pt)=(1.0,1.0,0.8147056970586383),
        rgb(0.9595959595959596 pt)=(1.0,1.0,0.8455880808821985),
        rgb(0.9696969696969697 pt)=(1.0,1.0,0.891911656617539),
        rgb(0.9797979797979798 pt)=(1.0,1.0,0.9227940404410993),
        rgb(0.98989898989899 pt)=(1.0,1.0,0.9691176161764397),
        rgb(1.0 pt)=(1.0,1.0,1.0),},
    label style = {font=\small},
    xticklabel style = {font=\tiny},
    colorbar style ={
        tick label style = {font=\tiny},
        xticklabel={$10^{
        \pgfmathparse{\tick}
        \pgfmathprintnumber[precision=2]{\pgfmathresult}}$},
        at = {(1.03,1)},
        anchor = north west,
    }
    ]
    \addplot graphics [
        xmin = -3.141593,
        xmax = 3.141593,
        ymin = -0.785398,
        ymax = 0.785398
    ] {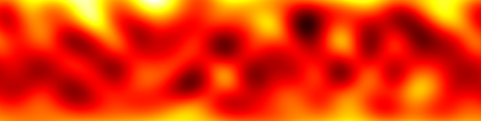};
\end{axis}
            \begin{axis}[
    title=,
    title style = {
        font=\bfseries,
        at = {(0.5, 0.9)},
        anchor = south
    },
    name=plot335,
    at=(plot334.below south),
    anchor=north,
    enlargelimits = false,
    axis on top = true,
    axis equal image,
    point meta min = -0.449031,
    point meta max = 2.098015,
    colorbar,xtick=\empty,ytick=\empty,
    colormap={hot}{
        rgb(0.0 pt)=(0.0416,0.0,0.0),
        rgb(0.010101010101010102 pt)=(0.0621896881088697,0.0,0.0),
        rgb(0.020202020202020204 pt)=(0.09307422027217424,0.0,0.0),
        rgb(0.030303030303030304 pt)=(0.11366390838104393,0.0,0.0),
        rgb(0.04040404040404041 pt)=(0.14454844054434846,0.0,0.0),
        rgb(0.050505050505050504 pt)=(0.1651381286532182,0.0,0.0),
        rgb(0.06060606060606061 pt)=(0.1960226608165227,0.0,0.0),
        rgb(0.0707070707070707 pt)=(0.22690719297982725,0.0,0.0),
        rgb(0.08080808080808081 pt)=(0.24749688108869694,0.0,0.0),
        rgb(0.09090909090909091 pt)=(0.2783814132520015,0.0,0.0),
        rgb(0.10101010101010101 pt)=(0.2989711013608711,0.0,0.0),
        rgb(0.1111111111111111 pt)=(0.3298556335241757,0.0,0.0),
        rgb(0.12121212121212122 pt)=(0.36074016568748024,0.0,0.0),
        rgb(0.13131313131313133 pt)=(0.38132985379634987,0.0,0.0),
        rgb(0.1414141414141414 pt)=(0.41221438595965454,0.0,0.0),
        rgb(0.15151515151515152 pt)=(0.43280407406852417,0.0,0.0),
        rgb(0.16161616161616163 pt)=(0.4636886062318286,0.0,0.0),
        rgb(0.1717171717171717 pt)=(0.48427829434069847,0.0,0.0),
        rgb(0.18181818181818182 pt)=(0.5151628265040029,0.0,0.0),
        rgb(0.1919191919191919 pt)=(0.5460473586673074,0.0,0.0),
        rgb(0.20202020202020202 pt)=(0.5666370467761772,0.0,0.0),
        rgb(0.21212121212121213 pt)=(0.5975215789394817,0.0,0.0),
        rgb(0.2222222222222222 pt)=(0.6181112670483514,0.0,0.0),
        rgb(0.23232323232323232 pt)=(0.6489957992116561,0.0,0.0),
        rgb(0.24242424242424243 pt)=(0.6798803313749605,0.0,0.0),
        rgb(0.25252525252525254 pt)=(0.7004700194838303,0.0,0.0),
        rgb(0.26262626262626265 pt)=(0.7313545516471347,0.0,0.0),
        rgb(0.2727272727272727 pt)=(0.7519442397560044,0.0,0.0),
        rgb(0.2828282828282828 pt)=(0.782828771919309,0.0,0.0),
        rgb(0.29292929292929293 pt)=(0.8034184600281785,0.0,0.0),
        rgb(0.30303030303030304 pt)=(0.8343029921914833,0.0,0.0),
        rgb(0.31313131313131315 pt)=(0.8651875243547877,0.0,0.0),
        rgb(0.32323232323232326 pt)=(0.8857772124636573,0.0,0.0),
        rgb(0.3333333333333333 pt)=(0.9166617446269619,0.0,0.0),
        rgb(0.3434343434343434 pt)=(0.9372514327358317,0.0,0.0),
        rgb(0.35353535353535354 pt)=(0.9681359648991361,0.0,0.0),
        rgb(0.36363636363636365 pt)=(0.9990204970624408,0.0,0.0),
        rgb(0.37373737373737376 pt)=(1.0,0.019608769606337603,0.0),
        rgb(0.3838383838383838 pt)=(1.0,0.05049107236377195,0.0),
        rgb(0.3939393939393939 pt)=(1.0,0.07107927420206175,0.0),
        rgb(0.40404040404040403 pt)=(1.0,0.10196157695949624,0.0),
        rgb(0.41414141414141414 pt)=(1.0,0.1328438797169306,0.0),
        rgb(0.42424242424242425 pt)=(1.0,0.1534320815552204,0.0),
        rgb(0.43434343434343436 pt)=(1.0,0.1843143843126549,0.0),
        rgb(0.4444444444444444 pt)=(1.0,0.20490258615094456,0.0),
        rgb(0.45454545454545453 pt)=(1.0,0.23578488890837906,0.0),
        rgb(0.46464646464646464 pt)=(1.0,0.2563730907466687,0.0),
        rgb(0.47474747474747475 pt)=(1.0,0.28725539350410323,0.0),
        rgb(0.48484848484848486 pt)=(1.0,0.3181376962615377,0.0),
        rgb(0.494949494949495 pt)=(1.0,0.33872589809982734,0.0),
        rgb(0.5050505050505051 pt)=(1.0,0.36960820085726187,0.0),
        rgb(0.5151515151515151 pt)=(1.0,0.3901964026955515,0.0),
        rgb(0.5252525252525253 pt)=(1.0,0.421078705452986,0.0),
        rgb(0.5353535353535354 pt)=(1.0,0.4519610082104205,0.0),
        rgb(0.5454545454545454 pt)=(1.0,0.47254921004871014,0.0),
        rgb(0.5555555555555556 pt)=(1.0,0.5034315128061446,0.0),
        rgb(0.5656565656565656 pt)=(1.0,0.5240197146444343,0.0),
        rgb(0.5757575757575758 pt)=(1.0,0.5549020174018688,0.0),
        rgb(0.5858585858585859 pt)=(1.0,0.5754902192401584,0.0),
        rgb(0.5959595959595959 pt)=(1.0,0.606372521997593,0.0),
        rgb(0.6060606060606061 pt)=(1.0,0.6372548247550275,0.0),
        rgb(0.6161616161616161 pt)=(1.0,0.6578430265933172,0.0),
        rgb(0.6262626262626263 pt)=(1.0,0.6887253293507516,0.0),
        rgb(0.6363636363636364 pt)=(1.0,0.7093135311890413,0.0),
        rgb(0.6464646464646465 pt)=(1.0,0.7401958339464758,0.0),
        rgb(0.6565656565656566 pt)=(1.0,0.7710781367039102,0.0),
        rgb(0.6666666666666666 pt)=(1.0,0.7916663385421999,0.0),
        rgb(0.6767676767676768 pt)=(1.0,0.8225486412996345,0.0),
        rgb(0.6868686868686869 pt)=(1.0,0.843136843137924,0.0),
        rgb(0.696969696969697 pt)=(1.0,0.8740191458953586,0.0),
        rgb(0.7070707070707071 pt)=(1.0,0.9049014486527931,0.0),
        rgb(0.7171717171717171 pt)=(1.0,0.9254896504910827,0.0),
        rgb(0.7272727272727273 pt)=(1.0,0.9563719532485172,0.0),
        rgb(0.7373737373737373 pt)=(1.0,0.9769601550868069,0.0),
        rgb(0.7474747474747475 pt)=(1.0,1.0,0.011763717646070686),
        rgb(0.7575757575757576 pt)=(1.0,1.0,0.04264610146963098),
        rgb(0.7676767676767676 pt)=(1.0,1.0,0.08896967720497098),
        rgb(0.7777777777777778 pt)=(1.0,1.0,0.13529325294031186),
        rgb(0.7878787878787878 pt)=(1.0,1.0,0.16617563676387215),
        rgb(0.797979797979798 pt)=(1.0,1.0,0.21249921249921258),
        rgb(0.8080808080808081 pt)=(1.0,1.0,0.24338159632277287),
        rgb(0.8181818181818182 pt)=(1.0,1.0,0.2897051720581133),
        rgb(0.8282828282828283 pt)=(1.0,1.0,0.3360287477934533),
        rgb(0.8383838383838383 pt)=(1.0,1.0,0.36691113161701405),
        rgb(0.8484848484848485 pt)=(1.0,1.0,0.41323470735235446),
        rgb(0.8585858585858586 pt)=(1.0,1.0,0.44411709117591475),
        rgb(0.8686868686868687 pt)=(1.0,1.0,0.4904406669112552),
        rgb(0.8787878787878788 pt)=(1.0,1.0,0.5213230507348154),
        rgb(0.8888888888888888 pt)=(1.0,1.0,0.567646626470156),
        rgb(0.898989898989899 pt)=(1.0,1.0,0.6139702022054964),
        rgb(0.9090909090909091 pt)=(1.0,1.0,0.6448525860290567),
        rgb(0.9191919191919192 pt)=(1.0,1.0,0.6911761617643971),
        rgb(0.9292929292929293 pt)=(1.0,1.0,0.7220585455879573),
        rgb(0.9393939393939394 pt)=(1.0,1.0,0.7683821213232979),
        rgb(0.9494949494949495 pt)=(1.0,1.0,0.8147056970586383),
        rgb(0.9595959595959596 pt)=(1.0,1.0,0.8455880808821985),
        rgb(0.9696969696969697 pt)=(1.0,1.0,0.891911656617539),
        rgb(0.9797979797979798 pt)=(1.0,1.0,0.9227940404410993),
        rgb(0.98989898989899 pt)=(1.0,1.0,0.9691176161764397),
        rgb(1.0 pt)=(1.0,1.0,1.0),},
    label style = {font=\small},
    xticklabel style = {font=\tiny},
    colorbar style ={
        tick label style = {font=\tiny},
        xticklabel={$10^{
        \pgfmathparse{\tick}
        \pgfmathprintnumber[precision=2]{\pgfmathresult}}$},
        at = {(1.03,1)},
        anchor = north west,
    }
    ]
    \addplot graphics [
        xmin = -3.141593,
        xmax = 3.141593,
        ymin = -0.785398,
        ymax = 0.785398
    ] {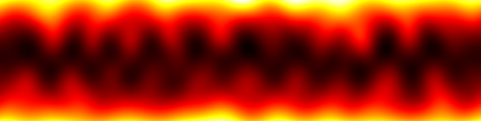};
\end{axis}
            \begin{axis}[
    title=,
    title style = {
        font=\bfseries,
        at = {(0.5, 0.9)},
        anchor = south
    },
    name=plot336,
    at=(plot335.below south),
    anchor=north,
    enlargelimits = false,
    axis on top = true,
    axis equal image,
    point meta min = -1.215294,
    point meta max = 1.407569,
    colorbar,xlabel = {az. [rad]},ytick=\empty,
    colormap={hot}{
        rgb(0.0 pt)=(0.0416,0.0,0.0),
        rgb(0.010101010101010102 pt)=(0.0621896881088697,0.0,0.0),
        rgb(0.020202020202020204 pt)=(0.09307422027217424,0.0,0.0),
        rgb(0.030303030303030304 pt)=(0.11366390838104393,0.0,0.0),
        rgb(0.04040404040404041 pt)=(0.14454844054434846,0.0,0.0),
        rgb(0.050505050505050504 pt)=(0.1651381286532182,0.0,0.0),
        rgb(0.06060606060606061 pt)=(0.1960226608165227,0.0,0.0),
        rgb(0.0707070707070707 pt)=(0.22690719297982725,0.0,0.0),
        rgb(0.08080808080808081 pt)=(0.24749688108869694,0.0,0.0),
        rgb(0.09090909090909091 pt)=(0.2783814132520015,0.0,0.0),
        rgb(0.10101010101010101 pt)=(0.2989711013608711,0.0,0.0),
        rgb(0.1111111111111111 pt)=(0.3298556335241757,0.0,0.0),
        rgb(0.12121212121212122 pt)=(0.36074016568748024,0.0,0.0),
        rgb(0.13131313131313133 pt)=(0.38132985379634987,0.0,0.0),
        rgb(0.1414141414141414 pt)=(0.41221438595965454,0.0,0.0),
        rgb(0.15151515151515152 pt)=(0.43280407406852417,0.0,0.0),
        rgb(0.16161616161616163 pt)=(0.4636886062318286,0.0,0.0),
        rgb(0.1717171717171717 pt)=(0.48427829434069847,0.0,0.0),
        rgb(0.18181818181818182 pt)=(0.5151628265040029,0.0,0.0),
        rgb(0.1919191919191919 pt)=(0.5460473586673074,0.0,0.0),
        rgb(0.20202020202020202 pt)=(0.5666370467761772,0.0,0.0),
        rgb(0.21212121212121213 pt)=(0.5975215789394817,0.0,0.0),
        rgb(0.2222222222222222 pt)=(0.6181112670483514,0.0,0.0),
        rgb(0.23232323232323232 pt)=(0.6489957992116561,0.0,0.0),
        rgb(0.24242424242424243 pt)=(0.6798803313749605,0.0,0.0),
        rgb(0.25252525252525254 pt)=(0.7004700194838303,0.0,0.0),
        rgb(0.26262626262626265 pt)=(0.7313545516471347,0.0,0.0),
        rgb(0.2727272727272727 pt)=(0.7519442397560044,0.0,0.0),
        rgb(0.2828282828282828 pt)=(0.782828771919309,0.0,0.0),
        rgb(0.29292929292929293 pt)=(0.8034184600281785,0.0,0.0),
        rgb(0.30303030303030304 pt)=(0.8343029921914833,0.0,0.0),
        rgb(0.31313131313131315 pt)=(0.8651875243547877,0.0,0.0),
        rgb(0.32323232323232326 pt)=(0.8857772124636573,0.0,0.0),
        rgb(0.3333333333333333 pt)=(0.9166617446269619,0.0,0.0),
        rgb(0.3434343434343434 pt)=(0.9372514327358317,0.0,0.0),
        rgb(0.35353535353535354 pt)=(0.9681359648991361,0.0,0.0),
        rgb(0.36363636363636365 pt)=(0.9990204970624408,0.0,0.0),
        rgb(0.37373737373737376 pt)=(1.0,0.019608769606337603,0.0),
        rgb(0.3838383838383838 pt)=(1.0,0.05049107236377195,0.0),
        rgb(0.3939393939393939 pt)=(1.0,0.07107927420206175,0.0),
        rgb(0.40404040404040403 pt)=(1.0,0.10196157695949624,0.0),
        rgb(0.41414141414141414 pt)=(1.0,0.1328438797169306,0.0),
        rgb(0.42424242424242425 pt)=(1.0,0.1534320815552204,0.0),
        rgb(0.43434343434343436 pt)=(1.0,0.1843143843126549,0.0),
        rgb(0.4444444444444444 pt)=(1.0,0.20490258615094456,0.0),
        rgb(0.45454545454545453 pt)=(1.0,0.23578488890837906,0.0),
        rgb(0.46464646464646464 pt)=(1.0,0.2563730907466687,0.0),
        rgb(0.47474747474747475 pt)=(1.0,0.28725539350410323,0.0),
        rgb(0.48484848484848486 pt)=(1.0,0.3181376962615377,0.0),
        rgb(0.494949494949495 pt)=(1.0,0.33872589809982734,0.0),
        rgb(0.5050505050505051 pt)=(1.0,0.36960820085726187,0.0),
        rgb(0.5151515151515151 pt)=(1.0,0.3901964026955515,0.0),
        rgb(0.5252525252525253 pt)=(1.0,0.421078705452986,0.0),
        rgb(0.5353535353535354 pt)=(1.0,0.4519610082104205,0.0),
        rgb(0.5454545454545454 pt)=(1.0,0.47254921004871014,0.0),
        rgb(0.5555555555555556 pt)=(1.0,0.5034315128061446,0.0),
        rgb(0.5656565656565656 pt)=(1.0,0.5240197146444343,0.0),
        rgb(0.5757575757575758 pt)=(1.0,0.5549020174018688,0.0),
        rgb(0.5858585858585859 pt)=(1.0,0.5754902192401584,0.0),
        rgb(0.5959595959595959 pt)=(1.0,0.606372521997593,0.0),
        rgb(0.6060606060606061 pt)=(1.0,0.6372548247550275,0.0),
        rgb(0.6161616161616161 pt)=(1.0,0.6578430265933172,0.0),
        rgb(0.6262626262626263 pt)=(1.0,0.6887253293507516,0.0),
        rgb(0.6363636363636364 pt)=(1.0,0.7093135311890413,0.0),
        rgb(0.6464646464646465 pt)=(1.0,0.7401958339464758,0.0),
        rgb(0.6565656565656566 pt)=(1.0,0.7710781367039102,0.0),
        rgb(0.6666666666666666 pt)=(1.0,0.7916663385421999,0.0),
        rgb(0.6767676767676768 pt)=(1.0,0.8225486412996345,0.0),
        rgb(0.6868686868686869 pt)=(1.0,0.843136843137924,0.0),
        rgb(0.696969696969697 pt)=(1.0,0.8740191458953586,0.0),
        rgb(0.7070707070707071 pt)=(1.0,0.9049014486527931,0.0),
        rgb(0.7171717171717171 pt)=(1.0,0.9254896504910827,0.0),
        rgb(0.7272727272727273 pt)=(1.0,0.9563719532485172,0.0),
        rgb(0.7373737373737373 pt)=(1.0,0.9769601550868069,0.0),
        rgb(0.7474747474747475 pt)=(1.0,1.0,0.011763717646070686),
        rgb(0.7575757575757576 pt)=(1.0,1.0,0.04264610146963098),
        rgb(0.7676767676767676 pt)=(1.0,1.0,0.08896967720497098),
        rgb(0.7777777777777778 pt)=(1.0,1.0,0.13529325294031186),
        rgb(0.7878787878787878 pt)=(1.0,1.0,0.16617563676387215),
        rgb(0.797979797979798 pt)=(1.0,1.0,0.21249921249921258),
        rgb(0.8080808080808081 pt)=(1.0,1.0,0.24338159632277287),
        rgb(0.8181818181818182 pt)=(1.0,1.0,0.2897051720581133),
        rgb(0.8282828282828283 pt)=(1.0,1.0,0.3360287477934533),
        rgb(0.8383838383838383 pt)=(1.0,1.0,0.36691113161701405),
        rgb(0.8484848484848485 pt)=(1.0,1.0,0.41323470735235446),
        rgb(0.8585858585858586 pt)=(1.0,1.0,0.44411709117591475),
        rgb(0.8686868686868687 pt)=(1.0,1.0,0.4904406669112552),
        rgb(0.8787878787878788 pt)=(1.0,1.0,0.5213230507348154),
        rgb(0.8888888888888888 pt)=(1.0,1.0,0.567646626470156),
        rgb(0.898989898989899 pt)=(1.0,1.0,0.6139702022054964),
        rgb(0.9090909090909091 pt)=(1.0,1.0,0.6448525860290567),
        rgb(0.9191919191919192 pt)=(1.0,1.0,0.6911761617643971),
        rgb(0.9292929292929293 pt)=(1.0,1.0,0.7220585455879573),
        rgb(0.9393939393939394 pt)=(1.0,1.0,0.7683821213232979),
        rgb(0.9494949494949495 pt)=(1.0,1.0,0.8147056970586383),
        rgb(0.9595959595959596 pt)=(1.0,1.0,0.8455880808821985),
        rgb(0.9696969696969697 pt)=(1.0,1.0,0.891911656617539),
        rgb(0.9797979797979798 pt)=(1.0,1.0,0.9227940404410993),
        rgb(0.98989898989899 pt)=(1.0,1.0,0.9691176161764397),
        rgb(1.0 pt)=(1.0,1.0,1.0),},
    label style = {font=\small},
    xticklabel style = {font=\tiny},
    colorbar style ={
        tick label style = {font=\tiny},
        xticklabel={$10^{
        \pgfmathparse{\tick}
        \pgfmathprintnumber[precision=2]{\pgfmathresult}}$},
        at = {(1.03,1)},
        anchor = north west,
    }
    ]
    \addplot graphics [
        xmin = -3.141593,
        xmax = 3.141593,
        ymin = -0.785398,
        ymax = 0.785398
    ] {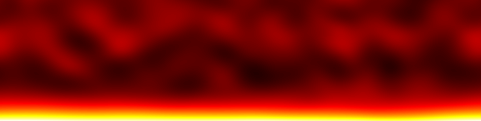};
\end{axis}
            \begin{axis}[
    title=SGD-based Approach,
    title style = {
        font=\bfseries,
        at = {(0.5, 0.9)},
        anchor = south
    },
    name=plot337,
    at=(plot333.below south),
    anchor=above north,
    enlargelimits = false,
    axis on top = true,
    axis equal image,
    point meta min = -2.360772,
    point meta max = -1.318539,
    ,xtick=\empty,ylabel = {ele. [rad]},
    colormap={hot}{
        rgb(0.0 pt)=(0.0416,0.0,0.0),
        rgb(0.010101010101010102 pt)=(0.0621896881088697,0.0,0.0),
        rgb(0.020202020202020204 pt)=(0.09307422027217424,0.0,0.0),
        rgb(0.030303030303030304 pt)=(0.11366390838104393,0.0,0.0),
        rgb(0.04040404040404041 pt)=(0.14454844054434846,0.0,0.0),
        rgb(0.050505050505050504 pt)=(0.1651381286532182,0.0,0.0),
        rgb(0.06060606060606061 pt)=(0.1960226608165227,0.0,0.0),
        rgb(0.0707070707070707 pt)=(0.22690719297982725,0.0,0.0),
        rgb(0.08080808080808081 pt)=(0.24749688108869694,0.0,0.0),
        rgb(0.09090909090909091 pt)=(0.2783814132520015,0.0,0.0),
        rgb(0.10101010101010101 pt)=(0.2989711013608711,0.0,0.0),
        rgb(0.1111111111111111 pt)=(0.3298556335241757,0.0,0.0),
        rgb(0.12121212121212122 pt)=(0.36074016568748024,0.0,0.0),
        rgb(0.13131313131313133 pt)=(0.38132985379634987,0.0,0.0),
        rgb(0.1414141414141414 pt)=(0.41221438595965454,0.0,0.0),
        rgb(0.15151515151515152 pt)=(0.43280407406852417,0.0,0.0),
        rgb(0.16161616161616163 pt)=(0.4636886062318286,0.0,0.0),
        rgb(0.1717171717171717 pt)=(0.48427829434069847,0.0,0.0),
        rgb(0.18181818181818182 pt)=(0.5151628265040029,0.0,0.0),
        rgb(0.1919191919191919 pt)=(0.5460473586673074,0.0,0.0),
        rgb(0.20202020202020202 pt)=(0.5666370467761772,0.0,0.0),
        rgb(0.21212121212121213 pt)=(0.5975215789394817,0.0,0.0),
        rgb(0.2222222222222222 pt)=(0.6181112670483514,0.0,0.0),
        rgb(0.23232323232323232 pt)=(0.6489957992116561,0.0,0.0),
        rgb(0.24242424242424243 pt)=(0.6798803313749605,0.0,0.0),
        rgb(0.25252525252525254 pt)=(0.7004700194838303,0.0,0.0),
        rgb(0.26262626262626265 pt)=(0.7313545516471347,0.0,0.0),
        rgb(0.2727272727272727 pt)=(0.7519442397560044,0.0,0.0),
        rgb(0.2828282828282828 pt)=(0.782828771919309,0.0,0.0),
        rgb(0.29292929292929293 pt)=(0.8034184600281785,0.0,0.0),
        rgb(0.30303030303030304 pt)=(0.8343029921914833,0.0,0.0),
        rgb(0.31313131313131315 pt)=(0.8651875243547877,0.0,0.0),
        rgb(0.32323232323232326 pt)=(0.8857772124636573,0.0,0.0),
        rgb(0.3333333333333333 pt)=(0.9166617446269619,0.0,0.0),
        rgb(0.3434343434343434 pt)=(0.9372514327358317,0.0,0.0),
        rgb(0.35353535353535354 pt)=(0.9681359648991361,0.0,0.0),
        rgb(0.36363636363636365 pt)=(0.9990204970624408,0.0,0.0),
        rgb(0.37373737373737376 pt)=(1.0,0.019608769606337603,0.0),
        rgb(0.3838383838383838 pt)=(1.0,0.05049107236377195,0.0),
        rgb(0.3939393939393939 pt)=(1.0,0.07107927420206175,0.0),
        rgb(0.40404040404040403 pt)=(1.0,0.10196157695949624,0.0),
        rgb(0.41414141414141414 pt)=(1.0,0.1328438797169306,0.0),
        rgb(0.42424242424242425 pt)=(1.0,0.1534320815552204,0.0),
        rgb(0.43434343434343436 pt)=(1.0,0.1843143843126549,0.0),
        rgb(0.4444444444444444 pt)=(1.0,0.20490258615094456,0.0),
        rgb(0.45454545454545453 pt)=(1.0,0.23578488890837906,0.0),
        rgb(0.46464646464646464 pt)=(1.0,0.2563730907466687,0.0),
        rgb(0.47474747474747475 pt)=(1.0,0.28725539350410323,0.0),
        rgb(0.48484848484848486 pt)=(1.0,0.3181376962615377,0.0),
        rgb(0.494949494949495 pt)=(1.0,0.33872589809982734,0.0),
        rgb(0.5050505050505051 pt)=(1.0,0.36960820085726187,0.0),
        rgb(0.5151515151515151 pt)=(1.0,0.3901964026955515,0.0),
        rgb(0.5252525252525253 pt)=(1.0,0.421078705452986,0.0),
        rgb(0.5353535353535354 pt)=(1.0,0.4519610082104205,0.0),
        rgb(0.5454545454545454 pt)=(1.0,0.47254921004871014,0.0),
        rgb(0.5555555555555556 pt)=(1.0,0.5034315128061446,0.0),
        rgb(0.5656565656565656 pt)=(1.0,0.5240197146444343,0.0),
        rgb(0.5757575757575758 pt)=(1.0,0.5549020174018688,0.0),
        rgb(0.5858585858585859 pt)=(1.0,0.5754902192401584,0.0),
        rgb(0.5959595959595959 pt)=(1.0,0.606372521997593,0.0),
        rgb(0.6060606060606061 pt)=(1.0,0.6372548247550275,0.0),
        rgb(0.6161616161616161 pt)=(1.0,0.6578430265933172,0.0),
        rgb(0.6262626262626263 pt)=(1.0,0.6887253293507516,0.0),
        rgb(0.6363636363636364 pt)=(1.0,0.7093135311890413,0.0),
        rgb(0.6464646464646465 pt)=(1.0,0.7401958339464758,0.0),
        rgb(0.6565656565656566 pt)=(1.0,0.7710781367039102,0.0),
        rgb(0.6666666666666666 pt)=(1.0,0.7916663385421999,0.0),
        rgb(0.6767676767676768 pt)=(1.0,0.8225486412996345,0.0),
        rgb(0.6868686868686869 pt)=(1.0,0.843136843137924,0.0),
        rgb(0.696969696969697 pt)=(1.0,0.8740191458953586,0.0),
        rgb(0.7070707070707071 pt)=(1.0,0.9049014486527931,0.0),
        rgb(0.7171717171717171 pt)=(1.0,0.9254896504910827,0.0),
        rgb(0.7272727272727273 pt)=(1.0,0.9563719532485172,0.0),
        rgb(0.7373737373737373 pt)=(1.0,0.9769601550868069,0.0),
        rgb(0.7474747474747475 pt)=(1.0,1.0,0.011763717646070686),
        rgb(0.7575757575757576 pt)=(1.0,1.0,0.04264610146963098),
        rgb(0.7676767676767676 pt)=(1.0,1.0,0.08896967720497098),
        rgb(0.7777777777777778 pt)=(1.0,1.0,0.13529325294031186),
        rgb(0.7878787878787878 pt)=(1.0,1.0,0.16617563676387215),
        rgb(0.797979797979798 pt)=(1.0,1.0,0.21249921249921258),
        rgb(0.8080808080808081 pt)=(1.0,1.0,0.24338159632277287),
        rgb(0.8181818181818182 pt)=(1.0,1.0,0.2897051720581133),
        rgb(0.8282828282828283 pt)=(1.0,1.0,0.3360287477934533),
        rgb(0.8383838383838383 pt)=(1.0,1.0,0.36691113161701405),
        rgb(0.8484848484848485 pt)=(1.0,1.0,0.41323470735235446),
        rgb(0.8585858585858586 pt)=(1.0,1.0,0.44411709117591475),
        rgb(0.8686868686868687 pt)=(1.0,1.0,0.4904406669112552),
        rgb(0.8787878787878788 pt)=(1.0,1.0,0.5213230507348154),
        rgb(0.8888888888888888 pt)=(1.0,1.0,0.567646626470156),
        rgb(0.898989898989899 pt)=(1.0,1.0,0.6139702022054964),
        rgb(0.9090909090909091 pt)=(1.0,1.0,0.6448525860290567),
        rgb(0.9191919191919192 pt)=(1.0,1.0,0.6911761617643971),
        rgb(0.9292929292929293 pt)=(1.0,1.0,0.7220585455879573),
        rgb(0.9393939393939394 pt)=(1.0,1.0,0.7683821213232979),
        rgb(0.9494949494949495 pt)=(1.0,1.0,0.8147056970586383),
        rgb(0.9595959595959596 pt)=(1.0,1.0,0.8455880808821985),
        rgb(0.9696969696969697 pt)=(1.0,1.0,0.891911656617539),
        rgb(0.9797979797979798 pt)=(1.0,1.0,0.9227940404410993),
        rgb(0.98989898989899 pt)=(1.0,1.0,0.9691176161764397),
        rgb(1.0 pt)=(1.0,1.0,1.0),},
    label style = {font=\small},
    xticklabel style = {font=\tiny},
    colorbar style ={
        tick label style = {font=\tiny},
        xticklabel={$10^{
        \pgfmathparse{\tick}
        \pgfmathprintnumber[precision=2]{\pgfmathresult}}$},
        at = {(1.03,1)},
        anchor = north west,
    }
    ]
    \addplot graphics [
        xmin = -3.141593,
        xmax = 3.141593,
        ymin = -0.785398,
        ymax = 0.785398
    ] {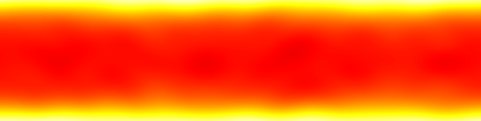};
\end{axis}
            \begin{axis}[
    title=,
    title style = {
        font=\bfseries,
        at = {(0.5, 0.9)},
        anchor = south
    },
    name=plot338,
    at=(plot337.below south),
    anchor=north,
    enlargelimits = false,
    axis on top = true,
    axis equal image,
    point meta min = -0.449031,
    point meta max = 2.196963,
    ,xtick=\empty,ylabel = {ele. [rad]},
    colormap={hot}{
        rgb(0.0 pt)=(0.0416,0.0,0.0),
        rgb(0.010101010101010102 pt)=(0.0621896881088697,0.0,0.0),
        rgb(0.020202020202020204 pt)=(0.09307422027217424,0.0,0.0),
        rgb(0.030303030303030304 pt)=(0.11366390838104393,0.0,0.0),
        rgb(0.04040404040404041 pt)=(0.14454844054434846,0.0,0.0),
        rgb(0.050505050505050504 pt)=(0.1651381286532182,0.0,0.0),
        rgb(0.06060606060606061 pt)=(0.1960226608165227,0.0,0.0),
        rgb(0.0707070707070707 pt)=(0.22690719297982725,0.0,0.0),
        rgb(0.08080808080808081 pt)=(0.24749688108869694,0.0,0.0),
        rgb(0.09090909090909091 pt)=(0.2783814132520015,0.0,0.0),
        rgb(0.10101010101010101 pt)=(0.2989711013608711,0.0,0.0),
        rgb(0.1111111111111111 pt)=(0.3298556335241757,0.0,0.0),
        rgb(0.12121212121212122 pt)=(0.36074016568748024,0.0,0.0),
        rgb(0.13131313131313133 pt)=(0.38132985379634987,0.0,0.0),
        rgb(0.1414141414141414 pt)=(0.41221438595965454,0.0,0.0),
        rgb(0.15151515151515152 pt)=(0.43280407406852417,0.0,0.0),
        rgb(0.16161616161616163 pt)=(0.4636886062318286,0.0,0.0),
        rgb(0.1717171717171717 pt)=(0.48427829434069847,0.0,0.0),
        rgb(0.18181818181818182 pt)=(0.5151628265040029,0.0,0.0),
        rgb(0.1919191919191919 pt)=(0.5460473586673074,0.0,0.0),
        rgb(0.20202020202020202 pt)=(0.5666370467761772,0.0,0.0),
        rgb(0.21212121212121213 pt)=(0.5975215789394817,0.0,0.0),
        rgb(0.2222222222222222 pt)=(0.6181112670483514,0.0,0.0),
        rgb(0.23232323232323232 pt)=(0.6489957992116561,0.0,0.0),
        rgb(0.24242424242424243 pt)=(0.6798803313749605,0.0,0.0),
        rgb(0.25252525252525254 pt)=(0.7004700194838303,0.0,0.0),
        rgb(0.26262626262626265 pt)=(0.7313545516471347,0.0,0.0),
        rgb(0.2727272727272727 pt)=(0.7519442397560044,0.0,0.0),
        rgb(0.2828282828282828 pt)=(0.782828771919309,0.0,0.0),
        rgb(0.29292929292929293 pt)=(0.8034184600281785,0.0,0.0),
        rgb(0.30303030303030304 pt)=(0.8343029921914833,0.0,0.0),
        rgb(0.31313131313131315 pt)=(0.8651875243547877,0.0,0.0),
        rgb(0.32323232323232326 pt)=(0.8857772124636573,0.0,0.0),
        rgb(0.3333333333333333 pt)=(0.9166617446269619,0.0,0.0),
        rgb(0.3434343434343434 pt)=(0.9372514327358317,0.0,0.0),
        rgb(0.35353535353535354 pt)=(0.9681359648991361,0.0,0.0),
        rgb(0.36363636363636365 pt)=(0.9990204970624408,0.0,0.0),
        rgb(0.37373737373737376 pt)=(1.0,0.019608769606337603,0.0),
        rgb(0.3838383838383838 pt)=(1.0,0.05049107236377195,0.0),
        rgb(0.3939393939393939 pt)=(1.0,0.07107927420206175,0.0),
        rgb(0.40404040404040403 pt)=(1.0,0.10196157695949624,0.0),
        rgb(0.41414141414141414 pt)=(1.0,0.1328438797169306,0.0),
        rgb(0.42424242424242425 pt)=(1.0,0.1534320815552204,0.0),
        rgb(0.43434343434343436 pt)=(1.0,0.1843143843126549,0.0),
        rgb(0.4444444444444444 pt)=(1.0,0.20490258615094456,0.0),
        rgb(0.45454545454545453 pt)=(1.0,0.23578488890837906,0.0),
        rgb(0.46464646464646464 pt)=(1.0,0.2563730907466687,0.0),
        rgb(0.47474747474747475 pt)=(1.0,0.28725539350410323,0.0),
        rgb(0.48484848484848486 pt)=(1.0,0.3181376962615377,0.0),
        rgb(0.494949494949495 pt)=(1.0,0.33872589809982734,0.0),
        rgb(0.5050505050505051 pt)=(1.0,0.36960820085726187,0.0),
        rgb(0.5151515151515151 pt)=(1.0,0.3901964026955515,0.0),
        rgb(0.5252525252525253 pt)=(1.0,0.421078705452986,0.0),
        rgb(0.5353535353535354 pt)=(1.0,0.4519610082104205,0.0),
        rgb(0.5454545454545454 pt)=(1.0,0.47254921004871014,0.0),
        rgb(0.5555555555555556 pt)=(1.0,0.5034315128061446,0.0),
        rgb(0.5656565656565656 pt)=(1.0,0.5240197146444343,0.0),
        rgb(0.5757575757575758 pt)=(1.0,0.5549020174018688,0.0),
        rgb(0.5858585858585859 pt)=(1.0,0.5754902192401584,0.0),
        rgb(0.5959595959595959 pt)=(1.0,0.606372521997593,0.0),
        rgb(0.6060606060606061 pt)=(1.0,0.6372548247550275,0.0),
        rgb(0.6161616161616161 pt)=(1.0,0.6578430265933172,0.0),
        rgb(0.6262626262626263 pt)=(1.0,0.6887253293507516,0.0),
        rgb(0.6363636363636364 pt)=(1.0,0.7093135311890413,0.0),
        rgb(0.6464646464646465 pt)=(1.0,0.7401958339464758,0.0),
        rgb(0.6565656565656566 pt)=(1.0,0.7710781367039102,0.0),
        rgb(0.6666666666666666 pt)=(1.0,0.7916663385421999,0.0),
        rgb(0.6767676767676768 pt)=(1.0,0.8225486412996345,0.0),
        rgb(0.6868686868686869 pt)=(1.0,0.843136843137924,0.0),
        rgb(0.696969696969697 pt)=(1.0,0.8740191458953586,0.0),
        rgb(0.7070707070707071 pt)=(1.0,0.9049014486527931,0.0),
        rgb(0.7171717171717171 pt)=(1.0,0.9254896504910827,0.0),
        rgb(0.7272727272727273 pt)=(1.0,0.9563719532485172,0.0),
        rgb(0.7373737373737373 pt)=(1.0,0.9769601550868069,0.0),
        rgb(0.7474747474747475 pt)=(1.0,1.0,0.011763717646070686),
        rgb(0.7575757575757576 pt)=(1.0,1.0,0.04264610146963098),
        rgb(0.7676767676767676 pt)=(1.0,1.0,0.08896967720497098),
        rgb(0.7777777777777778 pt)=(1.0,1.0,0.13529325294031186),
        rgb(0.7878787878787878 pt)=(1.0,1.0,0.16617563676387215),
        rgb(0.797979797979798 pt)=(1.0,1.0,0.21249921249921258),
        rgb(0.8080808080808081 pt)=(1.0,1.0,0.24338159632277287),
        rgb(0.8181818181818182 pt)=(1.0,1.0,0.2897051720581133),
        rgb(0.8282828282828283 pt)=(1.0,1.0,0.3360287477934533),
        rgb(0.8383838383838383 pt)=(1.0,1.0,0.36691113161701405),
        rgb(0.8484848484848485 pt)=(1.0,1.0,0.41323470735235446),
        rgb(0.8585858585858586 pt)=(1.0,1.0,0.44411709117591475),
        rgb(0.8686868686868687 pt)=(1.0,1.0,0.4904406669112552),
        rgb(0.8787878787878788 pt)=(1.0,1.0,0.5213230507348154),
        rgb(0.8888888888888888 pt)=(1.0,1.0,0.567646626470156),
        rgb(0.898989898989899 pt)=(1.0,1.0,0.6139702022054964),
        rgb(0.9090909090909091 pt)=(1.0,1.0,0.6448525860290567),
        rgb(0.9191919191919192 pt)=(1.0,1.0,0.6911761617643971),
        rgb(0.9292929292929293 pt)=(1.0,1.0,0.7220585455879573),
        rgb(0.9393939393939394 pt)=(1.0,1.0,0.7683821213232979),
        rgb(0.9494949494949495 pt)=(1.0,1.0,0.8147056970586383),
        rgb(0.9595959595959596 pt)=(1.0,1.0,0.8455880808821985),
        rgb(0.9696969696969697 pt)=(1.0,1.0,0.891911656617539),
        rgb(0.9797979797979798 pt)=(1.0,1.0,0.9227940404410993),
        rgb(0.98989898989899 pt)=(1.0,1.0,0.9691176161764397),
        rgb(1.0 pt)=(1.0,1.0,1.0),},
    label style = {font=\small},
    xticklabel style = {font=\tiny},
    colorbar style ={
        tick label style = {font=\tiny},
        xticklabel={$10^{
        \pgfmathparse{\tick}
        \pgfmathprintnumber[precision=2]{\pgfmathresult}}$},
        at = {(1.03,1)},
        anchor = north west,
    }
    ]
    \addplot graphics [
        xmin = -3.141593,
        xmax = 3.141593,
        ymin = -0.785398,
        ymax = 0.785398
    ] {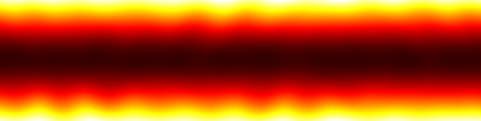};
\end{axis}
            \begin{axis}[
    title=,
    title style = {
        font=\bfseries,
        at = {(0.5, 0.9)},
        anchor = south
    },
    name=plot339,
    at=(plot338.below south),
    anchor=north,
    enlargelimits = false,
    axis on top = true,
    axis equal image,
    point meta min = -1.215294,
    point meta max = 2.928957,
    ,xlabel = {az. [rad]},ylabel = {ele. [rad]},
    colormap={hot}{
        rgb(0.0 pt)=(0.0416,0.0,0.0),
        rgb(0.010101010101010102 pt)=(0.0621896881088697,0.0,0.0),
        rgb(0.020202020202020204 pt)=(0.09307422027217424,0.0,0.0),
        rgb(0.030303030303030304 pt)=(0.11366390838104393,0.0,0.0),
        rgb(0.04040404040404041 pt)=(0.14454844054434846,0.0,0.0),
        rgb(0.050505050505050504 pt)=(0.1651381286532182,0.0,0.0),
        rgb(0.06060606060606061 pt)=(0.1960226608165227,0.0,0.0),
        rgb(0.0707070707070707 pt)=(0.22690719297982725,0.0,0.0),
        rgb(0.08080808080808081 pt)=(0.24749688108869694,0.0,0.0),
        rgb(0.09090909090909091 pt)=(0.2783814132520015,0.0,0.0),
        rgb(0.10101010101010101 pt)=(0.2989711013608711,0.0,0.0),
        rgb(0.1111111111111111 pt)=(0.3298556335241757,0.0,0.0),
        rgb(0.12121212121212122 pt)=(0.36074016568748024,0.0,0.0),
        rgb(0.13131313131313133 pt)=(0.38132985379634987,0.0,0.0),
        rgb(0.1414141414141414 pt)=(0.41221438595965454,0.0,0.0),
        rgb(0.15151515151515152 pt)=(0.43280407406852417,0.0,0.0),
        rgb(0.16161616161616163 pt)=(0.4636886062318286,0.0,0.0),
        rgb(0.1717171717171717 pt)=(0.48427829434069847,0.0,0.0),
        rgb(0.18181818181818182 pt)=(0.5151628265040029,0.0,0.0),
        rgb(0.1919191919191919 pt)=(0.5460473586673074,0.0,0.0),
        rgb(0.20202020202020202 pt)=(0.5666370467761772,0.0,0.0),
        rgb(0.21212121212121213 pt)=(0.5975215789394817,0.0,0.0),
        rgb(0.2222222222222222 pt)=(0.6181112670483514,0.0,0.0),
        rgb(0.23232323232323232 pt)=(0.6489957992116561,0.0,0.0),
        rgb(0.24242424242424243 pt)=(0.6798803313749605,0.0,0.0),
        rgb(0.25252525252525254 pt)=(0.7004700194838303,0.0,0.0),
        rgb(0.26262626262626265 pt)=(0.7313545516471347,0.0,0.0),
        rgb(0.2727272727272727 pt)=(0.7519442397560044,0.0,0.0),
        rgb(0.2828282828282828 pt)=(0.782828771919309,0.0,0.0),
        rgb(0.29292929292929293 pt)=(0.8034184600281785,0.0,0.0),
        rgb(0.30303030303030304 pt)=(0.8343029921914833,0.0,0.0),
        rgb(0.31313131313131315 pt)=(0.8651875243547877,0.0,0.0),
        rgb(0.32323232323232326 pt)=(0.8857772124636573,0.0,0.0),
        rgb(0.3333333333333333 pt)=(0.9166617446269619,0.0,0.0),
        rgb(0.3434343434343434 pt)=(0.9372514327358317,0.0,0.0),
        rgb(0.35353535353535354 pt)=(0.9681359648991361,0.0,0.0),
        rgb(0.36363636363636365 pt)=(0.9990204970624408,0.0,0.0),
        rgb(0.37373737373737376 pt)=(1.0,0.019608769606337603,0.0),
        rgb(0.3838383838383838 pt)=(1.0,0.05049107236377195,0.0),
        rgb(0.3939393939393939 pt)=(1.0,0.07107927420206175,0.0),
        rgb(0.40404040404040403 pt)=(1.0,0.10196157695949624,0.0),
        rgb(0.41414141414141414 pt)=(1.0,0.1328438797169306,0.0),
        rgb(0.42424242424242425 pt)=(1.0,0.1534320815552204,0.0),
        rgb(0.43434343434343436 pt)=(1.0,0.1843143843126549,0.0),
        rgb(0.4444444444444444 pt)=(1.0,0.20490258615094456,0.0),
        rgb(0.45454545454545453 pt)=(1.0,0.23578488890837906,0.0),
        rgb(0.46464646464646464 pt)=(1.0,0.2563730907466687,0.0),
        rgb(0.47474747474747475 pt)=(1.0,0.28725539350410323,0.0),
        rgb(0.48484848484848486 pt)=(1.0,0.3181376962615377,0.0),
        rgb(0.494949494949495 pt)=(1.0,0.33872589809982734,0.0),
        rgb(0.5050505050505051 pt)=(1.0,0.36960820085726187,0.0),
        rgb(0.5151515151515151 pt)=(1.0,0.3901964026955515,0.0),
        rgb(0.5252525252525253 pt)=(1.0,0.421078705452986,0.0),
        rgb(0.5353535353535354 pt)=(1.0,0.4519610082104205,0.0),
        rgb(0.5454545454545454 pt)=(1.0,0.47254921004871014,0.0),
        rgb(0.5555555555555556 pt)=(1.0,0.5034315128061446,0.0),
        rgb(0.5656565656565656 pt)=(1.0,0.5240197146444343,0.0),
        rgb(0.5757575757575758 pt)=(1.0,0.5549020174018688,0.0),
        rgb(0.5858585858585859 pt)=(1.0,0.5754902192401584,0.0),
        rgb(0.5959595959595959 pt)=(1.0,0.606372521997593,0.0),
        rgb(0.6060606060606061 pt)=(1.0,0.6372548247550275,0.0),
        rgb(0.6161616161616161 pt)=(1.0,0.6578430265933172,0.0),
        rgb(0.6262626262626263 pt)=(1.0,0.6887253293507516,0.0),
        rgb(0.6363636363636364 pt)=(1.0,0.7093135311890413,0.0),
        rgb(0.6464646464646465 pt)=(1.0,0.7401958339464758,0.0),
        rgb(0.6565656565656566 pt)=(1.0,0.7710781367039102,0.0),
        rgb(0.6666666666666666 pt)=(1.0,0.7916663385421999,0.0),
        rgb(0.6767676767676768 pt)=(1.0,0.8225486412996345,0.0),
        rgb(0.6868686868686869 pt)=(1.0,0.843136843137924,0.0),
        rgb(0.696969696969697 pt)=(1.0,0.8740191458953586,0.0),
        rgb(0.7070707070707071 pt)=(1.0,0.9049014486527931,0.0),
        rgb(0.7171717171717171 pt)=(1.0,0.9254896504910827,0.0),
        rgb(0.7272727272727273 pt)=(1.0,0.9563719532485172,0.0),
        rgb(0.7373737373737373 pt)=(1.0,0.9769601550868069,0.0),
        rgb(0.7474747474747475 pt)=(1.0,1.0,0.011763717646070686),
        rgb(0.7575757575757576 pt)=(1.0,1.0,0.04264610146963098),
        rgb(0.7676767676767676 pt)=(1.0,1.0,0.08896967720497098),
        rgb(0.7777777777777778 pt)=(1.0,1.0,0.13529325294031186),
        rgb(0.7878787878787878 pt)=(1.0,1.0,0.16617563676387215),
        rgb(0.797979797979798 pt)=(1.0,1.0,0.21249921249921258),
        rgb(0.8080808080808081 pt)=(1.0,1.0,0.24338159632277287),
        rgb(0.8181818181818182 pt)=(1.0,1.0,0.2897051720581133),
        rgb(0.8282828282828283 pt)=(1.0,1.0,0.3360287477934533),
        rgb(0.8383838383838383 pt)=(1.0,1.0,0.36691113161701405),
        rgb(0.8484848484848485 pt)=(1.0,1.0,0.41323470735235446),
        rgb(0.8585858585858586 pt)=(1.0,1.0,0.44411709117591475),
        rgb(0.8686868686868687 pt)=(1.0,1.0,0.4904406669112552),
        rgb(0.8787878787878788 pt)=(1.0,1.0,0.5213230507348154),
        rgb(0.8888888888888888 pt)=(1.0,1.0,0.567646626470156),
        rgb(0.898989898989899 pt)=(1.0,1.0,0.6139702022054964),
        rgb(0.9090909090909091 pt)=(1.0,1.0,0.6448525860290567),
        rgb(0.9191919191919192 pt)=(1.0,1.0,0.6911761617643971),
        rgb(0.9292929292929293 pt)=(1.0,1.0,0.7220585455879573),
        rgb(0.9393939393939394 pt)=(1.0,1.0,0.7683821213232979),
        rgb(0.9494949494949495 pt)=(1.0,1.0,0.8147056970586383),
        rgb(0.9595959595959596 pt)=(1.0,1.0,0.8455880808821985),
        rgb(0.9696969696969697 pt)=(1.0,1.0,0.891911656617539),
        rgb(0.9797979797979798 pt)=(1.0,1.0,0.9227940404410993),
        rgb(0.98989898989899 pt)=(1.0,1.0,0.9691176161764397),
        rgb(1.0 pt)=(1.0,1.0,1.0),},
    label style = {font=\small},
    xticklabel style = {font=\tiny},
    colorbar style ={
        tick label style = {font=\tiny},
        xticklabel={$10^{
        \pgfmathparse{\tick}
        \pgfmathprintnumber[precision=2]{\pgfmathresult}}$},
        at = {(1.03,1)},
        anchor = north west,
    }
    ]
    \addplot graphics [
        xmin = -3.141593,
        xmax = 3.141593,
        ymin = -0.785398,
        ymax = 0.785398
    ] {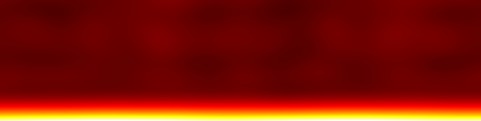};
\end{axis}
            \begin{axis}[
    title=Uncompressed,
    title style = {
        font=\bfseries,
        at = {(0.5, 0.9)},
        anchor = south
    },
    name=plot3310,
    at=(plot336.below south),
    anchor=above north,
    enlargelimits = false,
    axis on top = true,
    axis equal image,
    point meta min = -2.360772,
    point meta max = -1.318539,
    colorbar,xtick=\empty,ytick=\empty,
    colormap={hot}{
        rgb(0.0 pt)=(0.0416,0.0,0.0),
        rgb(0.010101010101010102 pt)=(0.0621896881088697,0.0,0.0),
        rgb(0.020202020202020204 pt)=(0.09307422027217424,0.0,0.0),
        rgb(0.030303030303030304 pt)=(0.11366390838104393,0.0,0.0),
        rgb(0.04040404040404041 pt)=(0.14454844054434846,0.0,0.0),
        rgb(0.050505050505050504 pt)=(0.1651381286532182,0.0,0.0),
        rgb(0.06060606060606061 pt)=(0.1960226608165227,0.0,0.0),
        rgb(0.0707070707070707 pt)=(0.22690719297982725,0.0,0.0),
        rgb(0.08080808080808081 pt)=(0.24749688108869694,0.0,0.0),
        rgb(0.09090909090909091 pt)=(0.2783814132520015,0.0,0.0),
        rgb(0.10101010101010101 pt)=(0.2989711013608711,0.0,0.0),
        rgb(0.1111111111111111 pt)=(0.3298556335241757,0.0,0.0),
        rgb(0.12121212121212122 pt)=(0.36074016568748024,0.0,0.0),
        rgb(0.13131313131313133 pt)=(0.38132985379634987,0.0,0.0),
        rgb(0.1414141414141414 pt)=(0.41221438595965454,0.0,0.0),
        rgb(0.15151515151515152 pt)=(0.43280407406852417,0.0,0.0),
        rgb(0.16161616161616163 pt)=(0.4636886062318286,0.0,0.0),
        rgb(0.1717171717171717 pt)=(0.48427829434069847,0.0,0.0),
        rgb(0.18181818181818182 pt)=(0.5151628265040029,0.0,0.0),
        rgb(0.1919191919191919 pt)=(0.5460473586673074,0.0,0.0),
        rgb(0.20202020202020202 pt)=(0.5666370467761772,0.0,0.0),
        rgb(0.21212121212121213 pt)=(0.5975215789394817,0.0,0.0),
        rgb(0.2222222222222222 pt)=(0.6181112670483514,0.0,0.0),
        rgb(0.23232323232323232 pt)=(0.6489957992116561,0.0,0.0),
        rgb(0.24242424242424243 pt)=(0.6798803313749605,0.0,0.0),
        rgb(0.25252525252525254 pt)=(0.7004700194838303,0.0,0.0),
        rgb(0.26262626262626265 pt)=(0.7313545516471347,0.0,0.0),
        rgb(0.2727272727272727 pt)=(0.7519442397560044,0.0,0.0),
        rgb(0.2828282828282828 pt)=(0.782828771919309,0.0,0.0),
        rgb(0.29292929292929293 pt)=(0.8034184600281785,0.0,0.0),
        rgb(0.30303030303030304 pt)=(0.8343029921914833,0.0,0.0),
        rgb(0.31313131313131315 pt)=(0.8651875243547877,0.0,0.0),
        rgb(0.32323232323232326 pt)=(0.8857772124636573,0.0,0.0),
        rgb(0.3333333333333333 pt)=(0.9166617446269619,0.0,0.0),
        rgb(0.3434343434343434 pt)=(0.9372514327358317,0.0,0.0),
        rgb(0.35353535353535354 pt)=(0.9681359648991361,0.0,0.0),
        rgb(0.36363636363636365 pt)=(0.9990204970624408,0.0,0.0),
        rgb(0.37373737373737376 pt)=(1.0,0.019608769606337603,0.0),
        rgb(0.3838383838383838 pt)=(1.0,0.05049107236377195,0.0),
        rgb(0.3939393939393939 pt)=(1.0,0.07107927420206175,0.0),
        rgb(0.40404040404040403 pt)=(1.0,0.10196157695949624,0.0),
        rgb(0.41414141414141414 pt)=(1.0,0.1328438797169306,0.0),
        rgb(0.42424242424242425 pt)=(1.0,0.1534320815552204,0.0),
        rgb(0.43434343434343436 pt)=(1.0,0.1843143843126549,0.0),
        rgb(0.4444444444444444 pt)=(1.0,0.20490258615094456,0.0),
        rgb(0.45454545454545453 pt)=(1.0,0.23578488890837906,0.0),
        rgb(0.46464646464646464 pt)=(1.0,0.2563730907466687,0.0),
        rgb(0.47474747474747475 pt)=(1.0,0.28725539350410323,0.0),
        rgb(0.48484848484848486 pt)=(1.0,0.3181376962615377,0.0),
        rgb(0.494949494949495 pt)=(1.0,0.33872589809982734,0.0),
        rgb(0.5050505050505051 pt)=(1.0,0.36960820085726187,0.0),
        rgb(0.5151515151515151 pt)=(1.0,0.3901964026955515,0.0),
        rgb(0.5252525252525253 pt)=(1.0,0.421078705452986,0.0),
        rgb(0.5353535353535354 pt)=(1.0,0.4519610082104205,0.0),
        rgb(0.5454545454545454 pt)=(1.0,0.47254921004871014,0.0),
        rgb(0.5555555555555556 pt)=(1.0,0.5034315128061446,0.0),
        rgb(0.5656565656565656 pt)=(1.0,0.5240197146444343,0.0),
        rgb(0.5757575757575758 pt)=(1.0,0.5549020174018688,0.0),
        rgb(0.5858585858585859 pt)=(1.0,0.5754902192401584,0.0),
        rgb(0.5959595959595959 pt)=(1.0,0.606372521997593,0.0),
        rgb(0.6060606060606061 pt)=(1.0,0.6372548247550275,0.0),
        rgb(0.6161616161616161 pt)=(1.0,0.6578430265933172,0.0),
        rgb(0.6262626262626263 pt)=(1.0,0.6887253293507516,0.0),
        rgb(0.6363636363636364 pt)=(1.0,0.7093135311890413,0.0),
        rgb(0.6464646464646465 pt)=(1.0,0.7401958339464758,0.0),
        rgb(0.6565656565656566 pt)=(1.0,0.7710781367039102,0.0),
        rgb(0.6666666666666666 pt)=(1.0,0.7916663385421999,0.0),
        rgb(0.6767676767676768 pt)=(1.0,0.8225486412996345,0.0),
        rgb(0.6868686868686869 pt)=(1.0,0.843136843137924,0.0),
        rgb(0.696969696969697 pt)=(1.0,0.8740191458953586,0.0),
        rgb(0.7070707070707071 pt)=(1.0,0.9049014486527931,0.0),
        rgb(0.7171717171717171 pt)=(1.0,0.9254896504910827,0.0),
        rgb(0.7272727272727273 pt)=(1.0,0.9563719532485172,0.0),
        rgb(0.7373737373737373 pt)=(1.0,0.9769601550868069,0.0),
        rgb(0.7474747474747475 pt)=(1.0,1.0,0.011763717646070686),
        rgb(0.7575757575757576 pt)=(1.0,1.0,0.04264610146963098),
        rgb(0.7676767676767676 pt)=(1.0,1.0,0.08896967720497098),
        rgb(0.7777777777777778 pt)=(1.0,1.0,0.13529325294031186),
        rgb(0.7878787878787878 pt)=(1.0,1.0,0.16617563676387215),
        rgb(0.797979797979798 pt)=(1.0,1.0,0.21249921249921258),
        rgb(0.8080808080808081 pt)=(1.0,1.0,0.24338159632277287),
        rgb(0.8181818181818182 pt)=(1.0,1.0,0.2897051720581133),
        rgb(0.8282828282828283 pt)=(1.0,1.0,0.3360287477934533),
        rgb(0.8383838383838383 pt)=(1.0,1.0,0.36691113161701405),
        rgb(0.8484848484848485 pt)=(1.0,1.0,0.41323470735235446),
        rgb(0.8585858585858586 pt)=(1.0,1.0,0.44411709117591475),
        rgb(0.8686868686868687 pt)=(1.0,1.0,0.4904406669112552),
        rgb(0.8787878787878788 pt)=(1.0,1.0,0.5213230507348154),
        rgb(0.8888888888888888 pt)=(1.0,1.0,0.567646626470156),
        rgb(0.898989898989899 pt)=(1.0,1.0,0.6139702022054964),
        rgb(0.9090909090909091 pt)=(1.0,1.0,0.6448525860290567),
        rgb(0.9191919191919192 pt)=(1.0,1.0,0.6911761617643971),
        rgb(0.9292929292929293 pt)=(1.0,1.0,0.7220585455879573),
        rgb(0.9393939393939394 pt)=(1.0,1.0,0.7683821213232979),
        rgb(0.9494949494949495 pt)=(1.0,1.0,0.8147056970586383),
        rgb(0.9595959595959596 pt)=(1.0,1.0,0.8455880808821985),
        rgb(0.9696969696969697 pt)=(1.0,1.0,0.891911656617539),
        rgb(0.9797979797979798 pt)=(1.0,1.0,0.9227940404410993),
        rgb(0.98989898989899 pt)=(1.0,1.0,0.9691176161764397),
        rgb(1.0 pt)=(1.0,1.0,1.0),},
    label style = {font=\small},
    xticklabel style = {font=\tiny},
    colorbar style ={
        tick label style = {font=\tiny},
        xticklabel={$10^{
        \pgfmathparse{\tick}
        \pgfmathprintnumber[precision=2]{\pgfmathresult}}$},
        at = {(1.03,1)},
        anchor = north west,
    }
    ]
    \addplot graphics [
        xmin = -3.141593,
        xmax = 3.141593,
        ymin = -0.785398,
        ymax = 0.785398
    ] {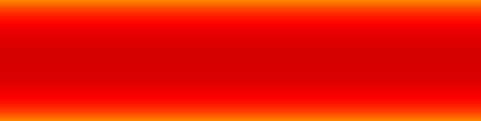};
\end{axis}
            \begin{axis}[
    title=,
    title style = {
        font=\bfseries,
        at = {(0.5, 0.9)},
        anchor = south
    },
    name=plot3311,
    at=(plot3310.below south),
    anchor=north,
    enlargelimits = false,
    axis on top = true,
    axis equal image,
    point meta min = -0.449031,
    point meta max = 2.196963,
    colorbar,xtick=\empty,ytick=\empty,
    colormap={hot}{
        rgb(0.0 pt)=(0.0416,0.0,0.0),
        rgb(0.010101010101010102 pt)=(0.0621896881088697,0.0,0.0),
        rgb(0.020202020202020204 pt)=(0.09307422027217424,0.0,0.0),
        rgb(0.030303030303030304 pt)=(0.11366390838104393,0.0,0.0),
        rgb(0.04040404040404041 pt)=(0.14454844054434846,0.0,0.0),
        rgb(0.050505050505050504 pt)=(0.1651381286532182,0.0,0.0),
        rgb(0.06060606060606061 pt)=(0.1960226608165227,0.0,0.0),
        rgb(0.0707070707070707 pt)=(0.22690719297982725,0.0,0.0),
        rgb(0.08080808080808081 pt)=(0.24749688108869694,0.0,0.0),
        rgb(0.09090909090909091 pt)=(0.2783814132520015,0.0,0.0),
        rgb(0.10101010101010101 pt)=(0.2989711013608711,0.0,0.0),
        rgb(0.1111111111111111 pt)=(0.3298556335241757,0.0,0.0),
        rgb(0.12121212121212122 pt)=(0.36074016568748024,0.0,0.0),
        rgb(0.13131313131313133 pt)=(0.38132985379634987,0.0,0.0),
        rgb(0.1414141414141414 pt)=(0.41221438595965454,0.0,0.0),
        rgb(0.15151515151515152 pt)=(0.43280407406852417,0.0,0.0),
        rgb(0.16161616161616163 pt)=(0.4636886062318286,0.0,0.0),
        rgb(0.1717171717171717 pt)=(0.48427829434069847,0.0,0.0),
        rgb(0.18181818181818182 pt)=(0.5151628265040029,0.0,0.0),
        rgb(0.1919191919191919 pt)=(0.5460473586673074,0.0,0.0),
        rgb(0.20202020202020202 pt)=(0.5666370467761772,0.0,0.0),
        rgb(0.21212121212121213 pt)=(0.5975215789394817,0.0,0.0),
        rgb(0.2222222222222222 pt)=(0.6181112670483514,0.0,0.0),
        rgb(0.23232323232323232 pt)=(0.6489957992116561,0.0,0.0),
        rgb(0.24242424242424243 pt)=(0.6798803313749605,0.0,0.0),
        rgb(0.25252525252525254 pt)=(0.7004700194838303,0.0,0.0),
        rgb(0.26262626262626265 pt)=(0.7313545516471347,0.0,0.0),
        rgb(0.2727272727272727 pt)=(0.7519442397560044,0.0,0.0),
        rgb(0.2828282828282828 pt)=(0.782828771919309,0.0,0.0),
        rgb(0.29292929292929293 pt)=(0.8034184600281785,0.0,0.0),
        rgb(0.30303030303030304 pt)=(0.8343029921914833,0.0,0.0),
        rgb(0.31313131313131315 pt)=(0.8651875243547877,0.0,0.0),
        rgb(0.32323232323232326 pt)=(0.8857772124636573,0.0,0.0),
        rgb(0.3333333333333333 pt)=(0.9166617446269619,0.0,0.0),
        rgb(0.3434343434343434 pt)=(0.9372514327358317,0.0,0.0),
        rgb(0.35353535353535354 pt)=(0.9681359648991361,0.0,0.0),
        rgb(0.36363636363636365 pt)=(0.9990204970624408,0.0,0.0),
        rgb(0.37373737373737376 pt)=(1.0,0.019608769606337603,0.0),
        rgb(0.3838383838383838 pt)=(1.0,0.05049107236377195,0.0),
        rgb(0.3939393939393939 pt)=(1.0,0.07107927420206175,0.0),
        rgb(0.40404040404040403 pt)=(1.0,0.10196157695949624,0.0),
        rgb(0.41414141414141414 pt)=(1.0,0.1328438797169306,0.0),
        rgb(0.42424242424242425 pt)=(1.0,0.1534320815552204,0.0),
        rgb(0.43434343434343436 pt)=(1.0,0.1843143843126549,0.0),
        rgb(0.4444444444444444 pt)=(1.0,0.20490258615094456,0.0),
        rgb(0.45454545454545453 pt)=(1.0,0.23578488890837906,0.0),
        rgb(0.46464646464646464 pt)=(1.0,0.2563730907466687,0.0),
        rgb(0.47474747474747475 pt)=(1.0,0.28725539350410323,0.0),
        rgb(0.48484848484848486 pt)=(1.0,0.3181376962615377,0.0),
        rgb(0.494949494949495 pt)=(1.0,0.33872589809982734,0.0),
        rgb(0.5050505050505051 pt)=(1.0,0.36960820085726187,0.0),
        rgb(0.5151515151515151 pt)=(1.0,0.3901964026955515,0.0),
        rgb(0.5252525252525253 pt)=(1.0,0.421078705452986,0.0),
        rgb(0.5353535353535354 pt)=(1.0,0.4519610082104205,0.0),
        rgb(0.5454545454545454 pt)=(1.0,0.47254921004871014,0.0),
        rgb(0.5555555555555556 pt)=(1.0,0.5034315128061446,0.0),
        rgb(0.5656565656565656 pt)=(1.0,0.5240197146444343,0.0),
        rgb(0.5757575757575758 pt)=(1.0,0.5549020174018688,0.0),
        rgb(0.5858585858585859 pt)=(1.0,0.5754902192401584,0.0),
        rgb(0.5959595959595959 pt)=(1.0,0.606372521997593,0.0),
        rgb(0.6060606060606061 pt)=(1.0,0.6372548247550275,0.0),
        rgb(0.6161616161616161 pt)=(1.0,0.6578430265933172,0.0),
        rgb(0.6262626262626263 pt)=(1.0,0.6887253293507516,0.0),
        rgb(0.6363636363636364 pt)=(1.0,0.7093135311890413,0.0),
        rgb(0.6464646464646465 pt)=(1.0,0.7401958339464758,0.0),
        rgb(0.6565656565656566 pt)=(1.0,0.7710781367039102,0.0),
        rgb(0.6666666666666666 pt)=(1.0,0.7916663385421999,0.0),
        rgb(0.6767676767676768 pt)=(1.0,0.8225486412996345,0.0),
        rgb(0.6868686868686869 pt)=(1.0,0.843136843137924,0.0),
        rgb(0.696969696969697 pt)=(1.0,0.8740191458953586,0.0),
        rgb(0.7070707070707071 pt)=(1.0,0.9049014486527931,0.0),
        rgb(0.7171717171717171 pt)=(1.0,0.9254896504910827,0.0),
        rgb(0.7272727272727273 pt)=(1.0,0.9563719532485172,0.0),
        rgb(0.7373737373737373 pt)=(1.0,0.9769601550868069,0.0),
        rgb(0.7474747474747475 pt)=(1.0,1.0,0.011763717646070686),
        rgb(0.7575757575757576 pt)=(1.0,1.0,0.04264610146963098),
        rgb(0.7676767676767676 pt)=(1.0,1.0,0.08896967720497098),
        rgb(0.7777777777777778 pt)=(1.0,1.0,0.13529325294031186),
        rgb(0.7878787878787878 pt)=(1.0,1.0,0.16617563676387215),
        rgb(0.797979797979798 pt)=(1.0,1.0,0.21249921249921258),
        rgb(0.8080808080808081 pt)=(1.0,1.0,0.24338159632277287),
        rgb(0.8181818181818182 pt)=(1.0,1.0,0.2897051720581133),
        rgb(0.8282828282828283 pt)=(1.0,1.0,0.3360287477934533),
        rgb(0.8383838383838383 pt)=(1.0,1.0,0.36691113161701405),
        rgb(0.8484848484848485 pt)=(1.0,1.0,0.41323470735235446),
        rgb(0.8585858585858586 pt)=(1.0,1.0,0.44411709117591475),
        rgb(0.8686868686868687 pt)=(1.0,1.0,0.4904406669112552),
        rgb(0.8787878787878788 pt)=(1.0,1.0,0.5213230507348154),
        rgb(0.8888888888888888 pt)=(1.0,1.0,0.567646626470156),
        rgb(0.898989898989899 pt)=(1.0,1.0,0.6139702022054964),
        rgb(0.9090909090909091 pt)=(1.0,1.0,0.6448525860290567),
        rgb(0.9191919191919192 pt)=(1.0,1.0,0.6911761617643971),
        rgb(0.9292929292929293 pt)=(1.0,1.0,0.7220585455879573),
        rgb(0.9393939393939394 pt)=(1.0,1.0,0.7683821213232979),
        rgb(0.9494949494949495 pt)=(1.0,1.0,0.8147056970586383),
        rgb(0.9595959595959596 pt)=(1.0,1.0,0.8455880808821985),
        rgb(0.9696969696969697 pt)=(1.0,1.0,0.891911656617539),
        rgb(0.9797979797979798 pt)=(1.0,1.0,0.9227940404410993),
        rgb(0.98989898989899 pt)=(1.0,1.0,0.9691176161764397),
        rgb(1.0 pt)=(1.0,1.0,1.0),},
    label style = {font=\small},
    xticklabel style = {font=\tiny},
    colorbar style ={
        tick label style = {font=\tiny},
        xticklabel={$10^{
        \pgfmathparse{\tick}
        \pgfmathprintnumber[precision=2]{\pgfmathresult}}$},
        at = {(1.03,1)},
        anchor = north west,
    }
    ]
    \addplot graphics [
        xmin = -3.141593,
        xmax = 3.141593,
        ymin = -0.785398,
        ymax = 0.785398
    ] {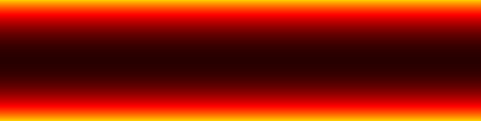};
\end{axis}
            \begin{axis}[
    title=,
    title style = {
        font=\bfseries,
        at = {(0.5, 0.9)},
        anchor = south
    },
    name=plot3312,
    at=(plot3311.below south),
    anchor=north,
    enlargelimits = false,
    axis on top = true,
    axis equal image,
    point meta min = -1.215294,
    point meta max = 2.928957,
    colorbar,xlabel = {az. [rad]},ytick=\empty,
    colormap={hot}{
        rgb(0.0 pt)=(0.0416,0.0,0.0),
        rgb(0.010101010101010102 pt)=(0.0621896881088697,0.0,0.0),
        rgb(0.020202020202020204 pt)=(0.09307422027217424,0.0,0.0),
        rgb(0.030303030303030304 pt)=(0.11366390838104393,0.0,0.0),
        rgb(0.04040404040404041 pt)=(0.14454844054434846,0.0,0.0),
        rgb(0.050505050505050504 pt)=(0.1651381286532182,0.0,0.0),
        rgb(0.06060606060606061 pt)=(0.1960226608165227,0.0,0.0),
        rgb(0.0707070707070707 pt)=(0.22690719297982725,0.0,0.0),
        rgb(0.08080808080808081 pt)=(0.24749688108869694,0.0,0.0),
        rgb(0.09090909090909091 pt)=(0.2783814132520015,0.0,0.0),
        rgb(0.10101010101010101 pt)=(0.2989711013608711,0.0,0.0),
        rgb(0.1111111111111111 pt)=(0.3298556335241757,0.0,0.0),
        rgb(0.12121212121212122 pt)=(0.36074016568748024,0.0,0.0),
        rgb(0.13131313131313133 pt)=(0.38132985379634987,0.0,0.0),
        rgb(0.1414141414141414 pt)=(0.41221438595965454,0.0,0.0),
        rgb(0.15151515151515152 pt)=(0.43280407406852417,0.0,0.0),
        rgb(0.16161616161616163 pt)=(0.4636886062318286,0.0,0.0),
        rgb(0.1717171717171717 pt)=(0.48427829434069847,0.0,0.0),
        rgb(0.18181818181818182 pt)=(0.5151628265040029,0.0,0.0),
        rgb(0.1919191919191919 pt)=(0.5460473586673074,0.0,0.0),
        rgb(0.20202020202020202 pt)=(0.5666370467761772,0.0,0.0),
        rgb(0.21212121212121213 pt)=(0.5975215789394817,0.0,0.0),
        rgb(0.2222222222222222 pt)=(0.6181112670483514,0.0,0.0),
        rgb(0.23232323232323232 pt)=(0.6489957992116561,0.0,0.0),
        rgb(0.24242424242424243 pt)=(0.6798803313749605,0.0,0.0),
        rgb(0.25252525252525254 pt)=(0.7004700194838303,0.0,0.0),
        rgb(0.26262626262626265 pt)=(0.7313545516471347,0.0,0.0),
        rgb(0.2727272727272727 pt)=(0.7519442397560044,0.0,0.0),
        rgb(0.2828282828282828 pt)=(0.782828771919309,0.0,0.0),
        rgb(0.29292929292929293 pt)=(0.8034184600281785,0.0,0.0),
        rgb(0.30303030303030304 pt)=(0.8343029921914833,0.0,0.0),
        rgb(0.31313131313131315 pt)=(0.8651875243547877,0.0,0.0),
        rgb(0.32323232323232326 pt)=(0.8857772124636573,0.0,0.0),
        rgb(0.3333333333333333 pt)=(0.9166617446269619,0.0,0.0),
        rgb(0.3434343434343434 pt)=(0.9372514327358317,0.0,0.0),
        rgb(0.35353535353535354 pt)=(0.9681359648991361,0.0,0.0),
        rgb(0.36363636363636365 pt)=(0.9990204970624408,0.0,0.0),
        rgb(0.37373737373737376 pt)=(1.0,0.019608769606337603,0.0),
        rgb(0.3838383838383838 pt)=(1.0,0.05049107236377195,0.0),
        rgb(0.3939393939393939 pt)=(1.0,0.07107927420206175,0.0),
        rgb(0.40404040404040403 pt)=(1.0,0.10196157695949624,0.0),
        rgb(0.41414141414141414 pt)=(1.0,0.1328438797169306,0.0),
        rgb(0.42424242424242425 pt)=(1.0,0.1534320815552204,0.0),
        rgb(0.43434343434343436 pt)=(1.0,0.1843143843126549,0.0),
        rgb(0.4444444444444444 pt)=(1.0,0.20490258615094456,0.0),
        rgb(0.45454545454545453 pt)=(1.0,0.23578488890837906,0.0),
        rgb(0.46464646464646464 pt)=(1.0,0.2563730907466687,0.0),
        rgb(0.47474747474747475 pt)=(1.0,0.28725539350410323,0.0),
        rgb(0.48484848484848486 pt)=(1.0,0.3181376962615377,0.0),
        rgb(0.494949494949495 pt)=(1.0,0.33872589809982734,0.0),
        rgb(0.5050505050505051 pt)=(1.0,0.36960820085726187,0.0),
        rgb(0.5151515151515151 pt)=(1.0,0.3901964026955515,0.0),
        rgb(0.5252525252525253 pt)=(1.0,0.421078705452986,0.0),
        rgb(0.5353535353535354 pt)=(1.0,0.4519610082104205,0.0),
        rgb(0.5454545454545454 pt)=(1.0,0.47254921004871014,0.0),
        rgb(0.5555555555555556 pt)=(1.0,0.5034315128061446,0.0),
        rgb(0.5656565656565656 pt)=(1.0,0.5240197146444343,0.0),
        rgb(0.5757575757575758 pt)=(1.0,0.5549020174018688,0.0),
        rgb(0.5858585858585859 pt)=(1.0,0.5754902192401584,0.0),
        rgb(0.5959595959595959 pt)=(1.0,0.606372521997593,0.0),
        rgb(0.6060606060606061 pt)=(1.0,0.6372548247550275,0.0),
        rgb(0.6161616161616161 pt)=(1.0,0.6578430265933172,0.0),
        rgb(0.6262626262626263 pt)=(1.0,0.6887253293507516,0.0),
        rgb(0.6363636363636364 pt)=(1.0,0.7093135311890413,0.0),
        rgb(0.6464646464646465 pt)=(1.0,0.7401958339464758,0.0),
        rgb(0.6565656565656566 pt)=(1.0,0.7710781367039102,0.0),
        rgb(0.6666666666666666 pt)=(1.0,0.7916663385421999,0.0),
        rgb(0.6767676767676768 pt)=(1.0,0.8225486412996345,0.0),
        rgb(0.6868686868686869 pt)=(1.0,0.843136843137924,0.0),
        rgb(0.696969696969697 pt)=(1.0,0.8740191458953586,0.0),
        rgb(0.7070707070707071 pt)=(1.0,0.9049014486527931,0.0),
        rgb(0.7171717171717171 pt)=(1.0,0.9254896504910827,0.0),
        rgb(0.7272727272727273 pt)=(1.0,0.9563719532485172,0.0),
        rgb(0.7373737373737373 pt)=(1.0,0.9769601550868069,0.0),
        rgb(0.7474747474747475 pt)=(1.0,1.0,0.011763717646070686),
        rgb(0.7575757575757576 pt)=(1.0,1.0,0.04264610146963098),
        rgb(0.7676767676767676 pt)=(1.0,1.0,0.08896967720497098),
        rgb(0.7777777777777778 pt)=(1.0,1.0,0.13529325294031186),
        rgb(0.7878787878787878 pt)=(1.0,1.0,0.16617563676387215),
        rgb(0.797979797979798 pt)=(1.0,1.0,0.21249921249921258),
        rgb(0.8080808080808081 pt)=(1.0,1.0,0.24338159632277287),
        rgb(0.8181818181818182 pt)=(1.0,1.0,0.2897051720581133),
        rgb(0.8282828282828283 pt)=(1.0,1.0,0.3360287477934533),
        rgb(0.8383838383838383 pt)=(1.0,1.0,0.36691113161701405),
        rgb(0.8484848484848485 pt)=(1.0,1.0,0.41323470735235446),
        rgb(0.8585858585858586 pt)=(1.0,1.0,0.44411709117591475),
        rgb(0.8686868686868687 pt)=(1.0,1.0,0.4904406669112552),
        rgb(0.8787878787878788 pt)=(1.0,1.0,0.5213230507348154),
        rgb(0.8888888888888888 pt)=(1.0,1.0,0.567646626470156),
        rgb(0.898989898989899 pt)=(1.0,1.0,0.6139702022054964),
        rgb(0.9090909090909091 pt)=(1.0,1.0,0.6448525860290567),
        rgb(0.9191919191919192 pt)=(1.0,1.0,0.6911761617643971),
        rgb(0.9292929292929293 pt)=(1.0,1.0,0.7220585455879573),
        rgb(0.9393939393939394 pt)=(1.0,1.0,0.7683821213232979),
        rgb(0.9494949494949495 pt)=(1.0,1.0,0.8147056970586383),
        rgb(0.9595959595959596 pt)=(1.0,1.0,0.8455880808821985),
        rgb(0.9696969696969697 pt)=(1.0,1.0,0.891911656617539),
        rgb(0.9797979797979798 pt)=(1.0,1.0,0.9227940404410993),
        rgb(0.98989898989899 pt)=(1.0,1.0,0.9691176161764397),
        rgb(1.0 pt)=(1.0,1.0,1.0),},
    label style = {font=\small},
    xticklabel style = {font=\tiny},
    colorbar style ={
        tick label style = {font=\tiny},
        xticklabel={$10^{
        \pgfmathparse{\tick}
        \pgfmathprintnumber[precision=2]{\pgfmathresult}}$},
        at = {(1.03,1)},
        anchor = north west,
    }
    ]
    \addplot graphics [
        xmin = -3.141593,
        xmax = 3.141593,
        ymin = -0.785398,
        ymax = 0.785398
    ] {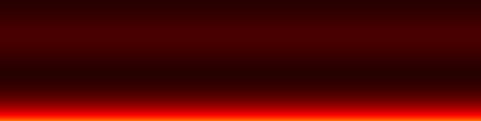};
\end{axis}
        \end{tikzpicture}
        \caption{\small Magnitude of the deterministic \gls{crb} on a logarithmic scale for fixed noise level $\sigma^2 = 1$ dependent on azimuth and elevation. \emph{Top}: single source, \emph{Middle}: two sources separated in azimuth by $2 \pi / 10$, \emph{Bottom}: two sources separated in elevation by $2 \pi / 10$. Notice the differing color bars for top, middle and bottom.}\label{crb_comparison}
    \end{center}
    \vspace{-8mm}
\end{figure*}

To quantify the performance of the proposed design apporach for the case of 2D \gls{doa} estimation, we also evaluate the deterministic \gls{crb} in Figure \ref{crb_comparison}, since it serves as a proxy to assess the possible performance of any unbiased estimator. For instance, the maximum likelihood estimator always reaches this lower bound asymptotically in the effective SNR, so one can expect that any well designed estimation procedure behaves similarly in the asymptotic regime. With spatial compression, the deterministic \gls{crb} for the $2$-dimensional case with $S$ sources and $1$ snapshot can be computed via~\cite{IRLKRLGDT:17}
\begin{align}
 C(\bm \theta) = \frac{\sigma^2}{2} \Tr\left(\left[ \Re(\bm D^\herm \bm \Pi_{\bm{G}}^\perp \bm D \odot (\bm{\mbox{1}}_{2 \times 2} \otimes \hat{\bm R})^\trans) \right]^{-1} \right), \label{eqn:crb_uncomp}
\end{align}
with $\bm \Pi_{\bm{G}}^\perp = \bm I - \bm G (\bm G^\herm \bm G)^{-1}\bm G^\herm$, $\otimes$ denoting the Kronecker product, $\Re$ the real part of a complex number and $\hat{\bm R} = \bm x \cdot \bm x^\herm$ being the sample covariance and we have set
\begin{align*}
\bm G &= \bm\Phi [\bm a_{\mathrm{SUCA}}(\theta_1, \vartheta_1), \dots, \bm a_{\mathrm{SUCA}}(\theta_S, \vartheta_S)],\\
\bm D_i &= \frac{\partial}{\partial \bm \theta_i} \bm G,~ \bm D = \left[\bm D_1, \dots, \bm D_d \right].
\end{align*}

The results in Figure \ref{crb_comparison} show for a fixed noise level $\sigma^2 = 1$ how the \gls{crb} changes for the four different sensing matrix designs depending on the position of a single source (top), two sources separated in azimuth (middle) and elevation (bottom). In the two sources case, the first source is located at the position denoted in the plot and the second with angular distance $2\pi / 10$ in azimuth or elevation. As one can see the random combining matrix and the previous approach introduce a highly varying sensitivity of the \gls{crb} with respect to azimuth and elevation, rendering the resulting combining matrices hard to apply for \gls{doa} estimation because of this non-uniformity in the angular domain. The \gls{sgd}-based approach results in a significantly smoother behavior of the \gls{crb} with a uniform increase across the whole azimuth and elevation region. Thus, the compressed array resulting from \gls{sgd} mimics the behavior of the uncompressed array more closely in terms of the \gls{crb}, which ultimately was the goal of the proposed design process.

\section{Conclusion and Outlook}
We have presented a flexible and computationally efficient scheme to optimize combining matrices for 2D \gls{doa} with compressive antenna arrays. First, it is not bound to a discrete dictionary for the angles, since they are selected during the gradient descent and do not have to reside on any grid. Second, it is possible to extend the model and the optimization to incorporate polarization or bistatic TX-RX setups with applications in channel sounding. Moreover, since the method is not bound to special antenna geometries, it can even cope with measured antenna patterns. Additionally, one has some level of control over the compressed array's behavior by choosing a suitable distribution for $\bm \Theta$. The low complexity also allows to use the proposed method as an online optimization during the measurement process itself by adaptively focusing on certain regions of the parameter space by again selecting the distribution of $\bm \Theta$. Ultimately, one could change the objective function in each gradient step to be something even more suitable for parameter estimation, like the deterministic \gls{crb}.

\bibliographystyle{IEEEtran}
\bibliography{references}

\end{document}